\documentclass[11pt]{article}
\usepackage{amssymb}
\usepackage{amsmath}
\usepackage{mathdots} 
\usepackage[capitalise,nameinlink]{cleveref}
\usepackage{tikz}
\usetikzlibrary{shapes.geometric,backgrounds,calc}
\usepackage{turnstile} 

\newcommand{\IN}{\mathbb N}

\newcommand\ma[1]{{\cal#1}}

\newcommand\pda{\textsc{pda}}
\newcommand\pdas{\textsc{pda}s}

\newcommand\oca{\textsc{oca}}
\newcommand\ocas{\textsc{oca}s}

\newcommand{\strong}{{\sf strong}}
\newcommand{\accept}{{\sf accept}}
\newcommand{\weak}{{\sf weak}}

\newcommand{\GENERA}[1]{%
  \mathrel{\vbox{\offinterlineskip\ialign{%
    \hfil##\hfil\cr
    $\scriptscriptstyle{#1}$\cr
    \noalign{\kern.4ex}
    $\Rightarrow$\cr
}}}}

\newcommand{\Mreaches}[1]{\sststile{\cal M}{#1}}
\newcommand{\Preaches}[1]{\sststile{{\cal M}'}{#1}}
\newcommand{\Nreaches}[1]{\sststile{\cal N}{#1}}

\newcommand*{\qed}{\mbox{}\nolinebreak\hfill~\raisebox{0.77ex}[0ex]{\framebox[1ex][l]{}}}

\newtheorem{theorem}{Theorem}

\newtheorem{lemma}{Lemma}
\newtheorem{corollary}{Corollary}

\newtheorem{definition}{Definition}
\newenvironment{proof}{\noindent{\em Proof.}}{\bigskip\noindent}

\newtheorem{ex}{Example}
\newenvironment{example}{\begin{ex}\rm}{\end{ex}}
\newenvironment{example-cont}[1]{\bigskip\noindent\textbf{Example~\ref{#1}.~(cont.)\hspace{\labelsep}}}{\bigskip\noindent}

\newcommand*{\eq}{{{\sf Eq}}}
\newcommand*{\meq}{{\ensuremath{\ma{A_{\eq}}}}}
\newcommand*{\meqs}{{\ensuremath{\ma{A_{\eq^*}}}}}
\newcommand{\bin}[1]{{{\sf bin}\ensuremath{(#1)}}}
\newcommand{\last}[1]{{{\sf last}\ensuremath{(#1)}}}
\newcommand{\ListBin}{{\ensuremath{\sf ListBin}}}
\newcommand{\Lsq}{{\ensuremath{\sf L_{sq}}}}
\newcommand{\Lk}[1]{{\ensuremath{{\sf L}}_{#1}}}
\newcommand{\Uk}[1]{{\ensuremath{{\sf U}}_{#1}}}
\newcommand{\Ustar}{{\ensuremath{\sf U_*}}}

\newcommand{\Valid}[1]{{\mbox{\sf Valid$(#1)$}}}
\newcommand{\pNotValid}[1]{{\mbox{\sf pNotValid$(#1)$}}}

\newcommand{\occurr}[2]{|#1|_{#2}}
\newcommand{\Log}{\lg}
\newcommand{\logk}[1]{\Log^{(#1)}\!}
\newcommand{\logstar}{\Log^*\!}

\newcommand{\tetration}[2]{{#1}^{\uparrow\uparrow#2}}
\newcommand{\tetra}[1]{{\tetration{2}{#1}}}

\newcommand{\Ext}[1]{{\mbox{\sf Ext$(#1)$}}}
\newcommand{\ExtK}[2]{{\mbox{\sf Ext$^{#1}(#2)$}}}

\begin{document}
\title{Turn Complexity of Context-free Languages, Pushdown Automata and One-Counter Automata\footnote{A preliminary version of this paper appeared in the proceedings of the conference DLT 2025~\cite{Pig25}.}}
\author{Giovanni Pighizzini\addtocounter{footnote}{2}\footnote{The author is a member of the Gruppo Nazionale Calcolo Scientifico-Istituto Nazionale di Alta Matematica (GNCS-INdAM).}\\
{\normalsize Dipartimento di Informatica}\\{\normalsize Universit\`{a} degli Studi di Milano, Italy}\\
{\normalsize\tt pighizzini@di.unimi.it}
}
\date{}
\maketitle
\noindent

%
%

%
%

\begin{abstract}
\noindent
A turn in a computation of a pushdown automaton is a switch from a phase in which the height of the pushdown store increases to a phase in which it decreases.
Given a pushdown or one-counter automaton, 
we consider, for each string in its language, the \emph{minimum number of turns made in accepting computations}.
We prove that it cannot be decided if this number is bounded by any constants. Furthermore, we obtain a non-recursive trade-off
between pushdown and one-counter automata accepting in a finite number of turns and finite-turn pushdown automata, that are defined requiring that the
constant bound is satisfied by \emph{each accepting computation}.
We prove that there are languages accepted in a sublinear but not constant number of turns, with respect to the input length.
Furthermore, there exists an infinite proper hierarchy of complexity classes, with the number of turns bounded by different sublinear functions.
In addition, there is a language requiring a number of turns which is not constant but grows slower than each of the functions defining the above hierarchy.
\end{abstract}


\section{Introduction}
Recently, we investigated two complexity measures related to context-free languages, pushdown automata, and one-counter automata.
The \emph{height of the pushdown} measures the space used in the pushdown store~\cite{PP23}.
In the restricted case of one-counter automata, in which the pushdown is used only to represent a counter written in unary, this measure
corresponds to the maximum value reached by the counter~\cite{PP25}. The \emph{push complexity} takes into account the number of push
operations used in computations~\cite{Pig2026}.
It should be clear that when one of these measures is bounded by a constant, namely it does not depend on the input length, then
the number of possible configurations of the pushdown store is finite. Hence, the store can be encoded in the finite control.
This implies the regularity of the accepted language.

The non-constant case is more interesting. First of all, there are fundamental differences if the costs of
all computations are taken into account, using the so-called \strong\ measure, or only
the less expensive accepting computation is considered for each accepted input (\weak\ measure).
In the \strong\ measure, if one of the above-mentioned resources is not bounded by any constant, then an amount that linearly grows
with respect to the length of the input is required~\cite{PP25,Pig2026}.
Similar results hold if the costs of \emph{all accepting} computations (instead of those of all computations) are taken 
into account (\accept\ measure)~\cite{PP25}.

The situation for the \weak\ measure is more complicated, but more challenging.
We point out that this measure is strictly related to the idea on nondeterminism in a broad sense: a nondeterministic machine is not only able
to make choices that lead to acceptance, if an accepting strategy does exist, but it is also able of taking choices that lead to
the \emph{less expensive} accepting strategy.
For the two resources mentioned at the beginning, it has been proved that if a machine uses an amount which is not bounded by any
constants, then this amount should grow at least as a double logarithmic function in the length of the input.
This lower bound is optimal for input alphabets of at least two symbols, while different and higher bounds (between
a logarithmic and a linear amount, depending on the resource) have been obtained in the unary case.

In this paper, we continue this line of investigation by considering a further resource, namely the number of \emph{turns} made
in computations.
Roughly, a pushdown automaton makes a turn during a computation, when a sequence of moves in which the height of
the store increases is followed by a sequence in which the height decreases.
In a pioneering paper on formal languages and pushdown automata, Ginsburg and Spanier started the investigation of
the role of turns in pushdown automata computations~\cite{GS66}.
In particular, they introduced \emph{finite-turn pushdown automata}, namely pushdown automata for which there exists a constant
limiting the number of turns made in \emph{each accepting computation}. The class of languages accepted by these devices
is a proper subclass of context-free languages and properly contains the class of
regular languages. Even restricting to one-counter automata that makes only one turn, the resulting class of languages is wider than
regular languages. In particular, 1-turn pushdown automata characterize the class of \emph{linear context-free languages}.

These facts already reveal a main difference between the number of turns and resources as pushdown height and push complexity:
one turn is enough to have more power than only with a finite control.
In the same paper, Ginsburg and Spanier presented several interesting results. Among them, they proved that it is decidable 
if a pushdown automaton is finite turn and if it is~$k$-turn, for a given~$k$. They also obtained a characterization of languages accepted
by finite-turn pushdown automata in terms of a special kind of context-free grammars, called \emph{ultralinear grammars}.
For this reason, the class of languages accepted by finite-turn automata is called class of \emph{ultralinear languages}.
We point out that \emph{all accepting} computations of the machine are taken into account, namely the \accept\ measure is considered.

It is quite natural to ask what happens by considering the \weak\ measure, i.e., if
for each accepted input, \emph{only} an accepting computation using the lowest number of turns is taken into account.
In the following, when we say that a machine \emph{accepts in} a certain number of turns, we always refer to the \weak\ measure.
Ginsburg and Rice observed that when a machine accepts in a constant number~$k$ of turns, namely
for each accepted input there is an accepting computation containing at most~$k$ turns, then the machine can be modified in order
to stop when the number of turns exceeds~$k$. The resulting \emph{$k$-turn pushdown automaton} is equivalent to the original one.
So, even under the \weak\ measure, the class of languages accepted in a finite number of turns coincides with
the class of ultralinear languages.
However, in contrast with the fact that it can be decided whether a pushdown automaton is finite turn (\accept\ measure), 
Ginsburg and Rice also proved that it is not decidable
whether a pushdown automaton accepts an ultralinear language~\cite{GS66}.

Going deeper into this investigation, in the first part of this paper, after presenting in Section~\ref{sec:prel} preliminary notions and a lemma that will be play a crucial role in several of the proofs contained in the paper,  in Section~\ref{sec:undec} we prove that it is  undecidable whether a pushdown automaton accepts in a finite number of turns (\weak\ measure).\footnote{We point out that this problem is different from the above mentioned problem of deciding if a pushdown automaton accepts a ultralinear language. Indeed, we can build pushdown automata accepting ultralinear languages, or even a regular languages,  but using an unbounded number of turns.}
Actually we prove that the property is undecidable already in the restricted case of one-counter automata. 
We also prove that it cannot be decided if a
one-counter automaton accepts in~$k$ turns, for a given~$k>0$. Only acceptance in~$0$ turns is decidable
for these devices as well as for pushdown automata.

We mentioned that each pushdown automaton accepting in~$k$ turns can be modified to obtain an equivalent~$k$-turn
pushdown automaton. However, as we prove in the paper, the value~$k$ cannot be bounded by any recursive function in the size of
the given machine. As a consequence, we are able to prove that the size trade-off between pushdown automata accepting in a 
constant number of turns and finite-turn pushdown automata is nonrecursive.\footnote{For a discussion on nonrecursive trade-offs we address the
reader to~\cite{Kut05}.}\,\footnote{%
Further results concerning the role of turns in the computations of pushdown automata, in both nondeterministic and deterministic cases,
have been obtained by Malcher~\cite{Mal07} and further expanded in~\cite{MP13}.}

\smallskip

The second part of the paper is devoted to study how the number of turns can grow with respect to the length of the input,
still considering the \weak\ measure.
It is easy to give examples of languages accepted in a constant number of turns and of languages accepted in a linear
number of turns. Here, we show the existence of languages accepted in a sublinear number of turns, for several
sublinear functions.
First, in Section~\ref{sec:sublinear} we prove that there is a language accepted by a one-counter automaton in~$O(\sqrt n)$
turns, where~$n$ is the length of the input. 
Furthermore, the same language cannot be accepted in~$o(\sqrt[3]n)$ turns, even using a pushdown automaton.

We then prove that there are machines and languages using even a lower number of turns.
In Section~\ref{sec:hie} we present a construction that, given a pushdown (resp., one-counter) automaton accepting a language in~$t(n)$ turns, allows to
obtain a pushdown (resp., one-counter) automaton accepting a different language in~$t(\log n)$ turns.
By iterating this construction, we are able to arbitrarily reduce the number of turns.
In particular, with a refinement of this construction, for each integer~$k$ we prove the existence of a language~$\Lk{k}$ accepted by a one-counter automaton
in~$\logk{k}n$ turns but that cannot be accepted
by any pushdown automaton in a number of turns growing less 
than~$\logk{k}n$.\footnote{$\logk{k}$ denotes the composition of the base~$2$ logarithm with itself~$k$ times.}
This result reveals a big difference with pushdown height and push complexity, for which there are double logarithmic lower bounds
when they are not constant, and defines a proper infinite hierarchy of complexity classes with respect to the number of turns.

It is possible to go further below: in Section~\ref{sec:ustar} we show the existence of a language~$\Ustar$ which is accepted by
a one-counter automaton in~$\logstar n$ turns (the \emph{iterated logarithm} of~$n$, a function growing slower than~$\logk{k}n$, 
for any~$k$). We also prove that such a number is necessary.

\medskip

We conclude the paper by presenting some final remarks in Section~\ref{sec:conclu}.

\section{Preliminaries}
\label{sec:prel}

We assume that the reader is familiar with the standard notions of automata and formal language theory as presented in textbooks, e.g.,~\cite{HU79}.
Given an alphabet~$\Sigma$, the set of strings over~$\Sigma$ is denoted by~$\Sigma^*$, with the empty
string denoted by~$\varepsilon$. The length of a string~$x\in\Sigma^*$ is denoted as~$|x|$, while the number of occurrences of a symbol~$a\in\Sigma$ in~$x$ as~$\occurr{x}{a}$. 
Given a set~$S$, its cardinality is denoted as~$\# S$, the family of
its subsets as~$2^S$, its complement as~$S^c$.

Let~$\log_2n$ denote the logarithm in base~$2$. 
For technical reasons, we modify this function in order to have a \emph{nonnegative} range.
More precisely we consider the function~$\Log n$, defined for~$n\geq 0$ as follows:
\[
\Log n=
\left\{\begin{array}{lll}
	\log_2n	&~~~&\mbox{if } n\geq 1,\\
	0 & &\mbox{otherwise}.
\end{array}\right.
\]
Given an integer~$k\geq 0$, $\logk{k}$ denotes the function obtained by composing~$k$ times the
function~$\Log$ with itself, namely
$\logk{k}n = \underbrace{\Log \Log \cdots \Log}_{k~\textrm{times}}n$.
We point\vspace*{-0.5cm}\linebreak[4]out that~$\logk{0}$ is the identity function.

\smallskip

The \emph{iterated logarithm} is the function~$\logstar:\IN\rightarrow\IN$ defined as
\[\logstar n= \min\{k\mid\logk{k}n\leq 1\}\,.\]
This function  grows very slowly. For instance~$\logstar n = 4$ for~$16 < n \leq 65536$,
while~$\logstar n = 5$ for~$65536 < n \leq 2^{65536}$.

Given integers~$x,k\geq 0$, let us denote by~$\tetration{x}{k}$ the number~$\underbrace{x^{x^{{\iddots}^{x}}}}_{k~{\rm times}}$, i.e.,
\[
\tetration{x}{k}=\left\{\begin{array}{ll}
	1 & \mbox{if } k=0,\\
	x^{(\tetration{x}{k-1})} & \mbox{otherwise}.
\end{array}\right.
\]
It is not difficult to observe that~$\logk{j}(\tetration{2}{k})=\tetration{2}{k-j}$, for~$0\leq j\leq k$, and hence~$\logk{k}(\tetration{2}{k})=1$.

The following inequality will be used in the paper:

\begin{lemma}
\label{lemma:logk-prod}
	Let~$\alpha>\beta>1$, then~$\Log(\alpha\beta)< 2\log\alpha$, and for~$k\geq 2$:
	\[\logk{k}(\alpha\beta) < 1+\logk{k}\alpha\,.\]
\end{lemma}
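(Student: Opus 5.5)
The first inequality follows directly from $\beta<\alpha$. Since $\alpha>1$, we have $\alpha\beta<\alpha^2$, and $\Log$ is strictly increasing on $[1,\infty)$. Hence $\Log(\alpha\beta)<\Log(\alpha^2)=2\Log\alpha$.

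For the second inequality I would apply one more logarithm to the first. Both $\alpha\beta>1$ and $\alpha>1$, so $\Log\alpha>0$ and $\Log(\alpha\beta)>0$. Whenever $\Log\alpha\geq 1$, that is $\alpha\geq 2$, monotonicity gives
\[
\logk{2}(\alpha\beta)<\Log(2\Log\alpha)=1+\logk{2}\alpha .
\]
The hypotheses do not force $\alpha\geq 2$, however. When $1<\alpha<2$ we have $\logk{2}\alpha=\Log(\Log\alpha)=0$, because $\Log\alpha<1$ and the modified $\Log$ is $0$ on $[0,1)$. So in this case we must show $\logk{2}(\alpha\beta)<1$, i.e.\ $\Log(\alpha\beta)<2$, i.e.\ $\alpha\beta<4$. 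This holds since $\alpha\beta<\alpha^2<4$. I expect this case split, caused by the truncated logarithm, to be the main point needing care.

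Then I would induct on $k\geq 2$. Suppose $\logk{k}(\alpha\beta)<1+\logk{k}\alpha$. Applying $\Log$ to both sides is legitimate because $\Log$ is non-decreasing on $[0,\infty)$. This gives
\[
\logk{k+1}(\alpha\beta)\leq\Log(1+\logk{k}\alpha),
\]
and it remains to show $\Log(1+x)<1+\Log x$ for $x=\logk{k}\alpha\geq 0$.

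I would check this auxiliary inequality by cases on $x$.
\begin{itemize}
\item If $x\geq 1$: then $1+x\leq 2x<2^{1+\log_2 x}\cdot$ fails to be strict only at equality, so I argue directly that $1+x\leq 2x$ gives $\Log(1+x)\leq 1+\Log x$, with equality exactly when $x=1$.
\item If $0\leq x<1$: then $1+x<2$, so $\Log(1+x)<1=1+\Log x$.
\end{itemize}

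The equality case $x=1$ threatens strictness, so strictness has to come from the inductive hypothesis instead. I would rewrite the induction step as follows. Since $\logk{k}(\alpha\beta)<1+x$, if $1+x\geq 1$ the strict monotonicity of $\Log$ on $[1,\infty)$ gives $\Log(\logk{k}(\alpha\beta))<\Log(1+x)$ whenever $\logk{k}(\alpha\beta)\geq 1$. If instead $\logk{k}(\alpha\beta)<1$, then its logarithm is $0<1\leq 1+\Log x$ directly. In both cases
\[
\logk{k+1}(\alpha\beta)<\Log(1+x)\leq 1+\Log x=1+\logk{k+1}\alpha,
\]
with strictness at the first step. This completes the induction.
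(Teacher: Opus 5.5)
Your proof is correct and follows the same route as the paper: the first inequality from $\alpha\beta<\alpha^2$, the case $k=2$ by applying $\Log$ once more, and induction via $\Log(1+x)\le 1+\Log x$. It is in fact more careful than the paper's, since your split at $\alpha<2$ and your separate treatment of $\logk{k}(\alpha\beta)<1$ handle the truncated logarithm at points where the paper's stated equalities and strict inequalities are not literally valid (for instance, $\Log(2\Log\alpha)=1+\logk{2}\alpha$ fails for $1<\alpha<2$). Two presentation slips need tidying. In the $x\ge 1$ bullet, the garbled clause should simply read $1+x\le 2x$, hence $\Log(1+x)\le\Log(2x)=1+\Log x$, since $2x=2^{1+\log_2 x}$ holds with equality. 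In your closing chain, the strict step $\logk{k+1}(\alpha\beta)<\Log(1+x)$ holds only in the first case; the second case is settled directly by $0<1\le 1+\Log x$, which you already wrote.
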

\begin{proof}
	First we observe that~$\Log(\alpha\beta)<\Log\alpha^2=2\Log\alpha$.
	
	For~$k=2$, we have:
	\[
	\logk{2}(\alpha\beta) = \Log\Log(\alpha\beta) < \Log\Log\alpha^2 = \Log (2\Log\alpha) = 1 + \logk{2}\alpha\,.
	\]
	We then consider~$k>2$. By induction we obtain:
	\[
	\logk{k}(\alpha\beta) = \log\logk{k-1}(\alpha\beta) < \log(1+\logk{k-1}\alpha) \leq 1 + \logk{k}\alpha\,,
	\]
	where, in the last step, we used the fact that~$\Log(1+x)\leq 1 +\Log x$, for~$x\geq 1$.
\qed
\end{proof}

\smallskip

We shortly remember the notion of pushdown automata as presented in~\cite{GS66,HU79}.
A \emph{pushdown automaton} (\pda, for short) is defined as~$\ma{M}=(Q,\Sigma,\Gamma,\delta,q_0,Z_0)$,
where~$Q$, $\Sigma$, and $\Gamma$ are finite sets: the set of states, the input alphabet, the pushdown alphabet, respectively;
$q_0\in Q$ is the initial state; $Z_0\in\Gamma$ is the start symbol on the pushdown store;
$\delta$ is the \emph{transition function} from~$Q\times(\Sigma\cup\{\varepsilon\})\times\Gamma$ to the finite subsets of~$Q\times\Gamma^*$,
where~$(p,\alpha)\in\delta(q,\sigma,A)$, with~$p,q\in Q$, $\sigma\in\Sigma\cup\{\varepsilon\}$, $A\in\Gamma$, $\alpha\in\Gamma^*$,
means that the automaton in the state~$q$, with~$A$ on the top of the pushdown store, reading the symbol~$\sigma\in\Sigma$ from
the tape, or without reading any symbol when~$\sigma=\varepsilon$, can move in the state~$p$ after replacing~$A$ on the top of the
stack with the string~$\alpha$.\footnote{%
	We use the convention that pushdown strings are written from top to bottom; hence, when~$\alpha\neq\varepsilon$, the leftmost symbol of~$\alpha$ will be
	on the top of the stack after this transition.
}
We assume \emph{acceptance by empty store}, namely an input~$w$ is accepted when there exists a computation starting from
the initial configuration  (input head on the first symbol of~$w$, finite control in the state~$q_0$, pushdown store containing only~$Z_0$)
that, according to the transition function, reaches a configuration in which all the input has been scanned (i.e., the input head is to 
the right of the rightmost symbol of~$w$) and the pushdown store is empty.
It is well known that the class of language accepted by these devices coincides with the class of 
\emph{context-free languages}. 

A \emph{turn} in a computation~$\cal C$ of a \pda\ is a sequence of~$k\geq 2$ moves
such that the height of the pushdown increases in the first move, decreases in the last move,
and it does not change in between, namely a turn is defined by~$k+1$ consecutive configurations, in which the
pushdown contents are~$\gamma_0, \gamma_1, \ldots, \gamma_k$, and~$|\gamma_0|<|\gamma_1|=|\gamma_2|=\cdots=|\gamma_{k-1}|>|\gamma_k|$.

According to~\cite{GS66}, a \pda~$\ma{M}$ is said to be~\emph{$k$-turn}, for an integer~$k\geq 0$, when \emph{each accepting computation} of~$\ma{M}$ contains at
most~$k$ turns.  A \pda~$\ma{M}$ is \emph{finite turn} if it is~$k$-turn for some~$k\geq 0$.
It is easy to see that~$0$-turns \pdas\ recognize only regular languages.
Furthermore, the class of languages accepted by~$1$-turn \pdas\ is the class of~\emph{linear languages}. 
This class is strictly included in the class of languages accepted by finite turns \pdas, which is the class of 
\emph{ultralinear languages}, a proper subclass of context-free languages.

In the above definitions, the number of turns in \emph{each accepting computation} is taken into account.
Since in a nondeterministic machine several computations using different numbers of turns could be possible, we are going to introduce
a different notion. In particular, we are going to consider the number of turns
which are \emph{sufficient} to accept input strings. More precisely, we give the following definition:

\begin{definition}
\label{def:turns}
	Let~$\ma{M}$ be a \pda\ accepting a language~$L$.
	We say that \emph{a string~$w\in L$ is accepted in~$k$ turns}, for an integer~$k\geq 0$, 
	when there exists a computation of~$\ma{M}$ accepting~$w$ and containing at most~$k$ turns.
	
	Given a function~$t:\IN\rightarrow\IN$, we say that \emph{$\ma{M}$ accepts~$L$ in~$t(n)$ turns} when, for each~$w\in L$
	with~$|w|=n$, there is an accepting computation on~$w$ containing at most~$t(n)$ turns. 
	Furthermore, if the function~$t(n)$ is bounded by a constant, then we say that \emph{$\ma{M}$ accepts in a finite number of turns}.
\end{definition}

If we know that a \pda\ accepts in~$k$ turns, for some given~$k\geq 0$, then we can always obtain an equivalent~$k$-turn \pda\ just
counting in the finite control the number of turns and stopping the computation when such a number exceeds~$k$.
However, as we will see, acceptance in~$k$ turns cannot be decided, unless~$k=0$.

According to~\cite{Har78}, the size of a \pda\ is the number of symbols used to write down its description and it
depends on the cardinalities of its alphabets and set of states, and of the maximum number of pushdown symbols that appear on the
right-hand sides of transition rules.

\medskip

A \emph{one-counter automaton} (\oca, for short) is a pushdown automaton in which the pushdown alphabet
consists only of one symbol~$A$, besides~$Z_0$. The symbol~$Z_0$ is used only to mark the bottom. 
In this way, the pushdown store always contains a string of the form~$A^kZ_0$, representing the integer~$k\geq 0$, with
the exception of accepting configurations, in which the store is empty.
It is well known that the class of languages recognized by \ocas\ is a proper subclass of context-free languages and 
properly contains the class of regular languages.

\medskip

In the paper, we use the language~$\eq=\{a^nb^n\mid n> 0\}$, which is a linear context-free language
accepted by a trivial deterministic \oca~$\meq$ that makes exactly one turn in each accepting computation.
By~$\meqs$ we denote the trivial deterministic \oca\ accepting the Kleene closure~$\eq^*$ of~$\eq$ and that makes exactly~$k$ turns to accept each string in~$\eq^k$, $k>0$.
It is folklore that~$k$ turns are necessary and sufficient to accept~$\eq^k$, for any fixed~$k\geq 0$.
Furthermore, each \pda\ accepting~$\eq^*$ should make~$k$ turns to accept any string of the 
form~$a^{n_1}b^{n_1}a^{n_2}b^{n_2}\cdots a^{n_k}b^{n_k}$, for sufficiently large~$n_1,n_2,\ldots,n_k$.

In Lemma~\ref{lemma:lowerBoundEQ}, we present a generalization of these lower bounds that will be used several times in the paper.

To prove it, it is convenient to convert each \pda\ in a normal form in which in a single move the topmost symbol of the pushdown store cannot be replaced. 
A such a form could be easily obtained by simulating each move that replaces the topmost symbol~$A$ with a string~$\alpha=\alpha'B$, $B\neq A$, by a move
that pops~$A$, followed by a move that pushes the string~$\alpha'$. However, such a construction
does not preserve the number of turns.
In the following technical lemma we present an alternative simulation that preserves the number of turns:

\begin{lemma}
	\label{lemma:normalForm}
	For each \pda~${\cal M}=(Q,\Sigma,\Gamma,\delta,q_0,Z_0)$ there exists an equivalent \pda~${\cal M}'=(Q',\Sigma,\Gamma',\delta',q'_0,\bot)$ such that 
	if~$(p,\alpha B)\in\delta'(q,\sigma,A)$, with~$p,q\in Q'$, $\alpha\in\Gamma'^*$, $\sigma\in\Sigma\cup\{\varepsilon\}$, $A,B\in\Gamma'$, then~$A=B$.
	Furthermore, for each~$w\in\Sigma^*$ and~$k\in\IN$, ${\cal M}$ has an accepting computation on~$w$ making exactly~$k$ turns if and only if~${\cal M}'$
	has an accepting computation on~$w$ making exactly~$k$ turns.
\end{lemma}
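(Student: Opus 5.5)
The idea is to keep the topmost symbol of the original store in the finite control of ${\cal M}'$. The store of ${\cal M}'$ then holds everything below the top of ${\cal M}$. Let $Q'=(Q\times(\Gamma\cup\{\natural\}))\cup\{q'_0\}$ and $\Gamma'=\Gamma\cup\{\bot\}$, where $\natural$ is a fresh marker meaning ``the store of ${\cal M}$ is empty''.

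The invariant is that a configuration of ${\cal M}$ in state $q$ with store $A\gamma$ corresponds to a configuration of ${\cal M}'$ in state $(q,A)$ with store $\gamma\bot$. The store height in ${\cal M}'$ is then exactly one less than in ${\cal M}$ in every nonempty configuration.

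The transitions are defined as follows.
\begin{itemize}
\item Initially, ${\cal M}'$ makes an $\varepsilon$-move from $q'_0$ to $(q_0,Z_0)$, leaving $\bot$ unchanged.
\item For a move $(p,\alpha)\in\delta(q,\sigma,A)$ with $\alpha=C\beta$ nonempty, ${\cal M}'$ in state $(q,A)$ with any top symbol $X$ reads $\sigma$ and goes to $(p,C)$. It replaces $X$ by $\beta X$, which is allowed since the new bottom symbol equals $X$.
\item For $\alpha=\varepsilon$, the new top of ${\cal M}$ is the current top $X$ of ${\cal M}'$. If $X\in\Gamma$, then ${\cal M}'$ goes to $(p,X)$ and pops $X$, which is allowed as a pop. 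If $X=\bot$, the store of ${\cal M}$ becomes empty, so ${\cal M}'$ goes to $(p,\natural)$ and pops $\bot$.
\end{itemize}
After emptying its store, ${\cal M}$ cannot move, so no further transitions are needed. Every move of ${\cal M}'$ either pops or replaces the top $X$ by a string ending in $X$, so the normal form holds. Acceptance by empty store is preserved by the invariant together with the final popping of $\bot$.

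The main point is the turn count. The moves are simulated one-to-one, apart from the initial move, which leaves the height unchanged, and that initial move has the same height as the following configurations. Under the invariant, each move of ${\cal M}$ changes the height by $|\alpha|-1$, and the simulating move of ${\cal M}'$ changes it by $|\beta|-1=|\alpha|-2+\ldots$. This needs checking, since the two changes are not obviously equal. In ${\cal M}$ the height goes from $1+|\gamma|$ to $|\alpha|+|\gamma|$, so the change is $|\alpha|-1$. In ${\cal M}'$, writing heights without the top, the height goes from $|\gamma|$ to $|\alpha|-1+|\gamma|$, so the change is also $|\alpha|-1$.

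The final pop, from $\bot$ at height $1$ to height $0$, matches the change from height $1$ to $0$ in ${\cal M}$. The two height sequences therefore differ by a constant offset along the whole simulated part, and the sequence of increases, decreases and stationary moves is identical. Since turns are defined purely through this pattern (an increase, then equal heights, then a decrease), the computations have exactly the same turns. The correspondence between computations is a bijection on accepting computations, because the transitions of ${\cal M}'$ mirror those of ${\cal M}$ given the top symbol stored in the state. This gives both directions of the ``exactly $k$ turns'' equivalence.

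The obstacle I expect is the bookkeeping at the bottom of the store, namely the empty case and the extra initial move. The offset must stay constant there, or at least no spurious increase or decrease may be introduced. I would check this carefully by verifying that the initial move is height-neutral. I would also check that the final pop of $\bot$ corresponds exactly to the last pop of ${\cal M}$.
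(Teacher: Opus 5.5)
Your proposal is correct and follows essentially the paper's proof: the top symbol of ${\cal M}$ is kept in the finite control, a fresh $\bot$ marks the bottom of the store of ${\cal M}'$, and each move of ${\cal M}$ is simulated by one move of ${\cal M}'$ with the same height change, so the turn patterns coincide. Two slips do not affect correctness: the heights are in fact equal rather than off by one, since $|A\gamma|=|\gamma\bot|$ (the paper's own observation, and consistent with your check of the final pop), and the extra initial state $q'_0$ is unnecessary because ${\cal M}'$ can simply start in $(q_0,Z_0)$ with store $\bot$.
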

\begin{proof}
The \pda~${\cal M}'$ makes a direct simulation of the moves of the given \pda~${\cal M}$, by keeping in its finite control the symbol which is on the top of the pushdown store of~${\cal M}$.
Hence, the pushdown contents of~${\cal M}$ is represented by the symbol in the finite control, plus the symbols in the store  of~${\cal M}'$.
To do that, when a transition of~${\cal M}$ replaces the topmost symbol~$A$ by a nonempty string~$\alpha$, in the \pda~${\cal M}'$ the symbol~$A$, which is kept in the control, is replaced by the first symbol of~$\alpha$, while the other symbols of~$\alpha$ are pushed on the stack.
When a transition of~${\cal M}$ pops the topmost symbol off the stack, the corresponding transition of~${\cal M}'$ also pops off the topmost symbol, which is the `new' top for~${\cal M}$, and saves it in the  control.
For technical reasons, it is useful to introduce a new pushdown symbol~$\bot$, which is used to mark the bottom of the store of~${\cal M}'$. This symbol is removed only when the store of~${\cal M}$ becomes empty, so ending the computation because the next transition is not defined.

Formally, we take~$\Gamma'=\Gamma\cup\{\bot\}$, where~$\bot\notin\Gamma$ is the only symbol on the pushdown store at the beginning of the computation, and~$Q'=Q\times\Gamma'$.
For brevity, states in~$Q'$ are represented as~$[qA]$, where~$q\in Q$ and~$A\in\Gamma'$.
The initial state~$q'_0$ is~$[q_0Z_0]$.

The transition function~$\delta':Q'\times(\Sigma\cup\{\varepsilon\})\times\Gamma'\rightarrow 2^{Q'\times\Gamma'^*}$ is defined by considering the transitions listed below,
where~$\sigma\in\Sigma\cup\{\varepsilon\}$, $q,p\in Q$, $A\in\Gamma$, $\alpha\in\Gamma^*$:
\begin{enumerate}
	\item[(a)]For each transition~$(p,\varepsilon)\in\delta(q,\sigma,A)$ and for each~$C\in\Gamma\cup\{\bot\}$, $\delta'([qA],\sigma,C)$ contains~$([pC],\varepsilon)$;
	\item[(b)]For each transition~$(p,B_1B_2\cdots B_k)\in\delta(q,\sigma,A)$, with~$B_1,B_2,\ldots,B_k\in\Gamma$, $k>0$,  and for each~$C\in\Gamma\cup\{\bot\}$, $\delta'([qA],\sigma,C)$ contains~$([pB_1],B_2\cdots B_kC)$.
\end{enumerate}
We now use standard notations for configurations and moves of~\pdas\ as in\cite{HU79}: a configuration is a triple~$(q,x,\alpha)$ where $q$ is the current state, $x$ is the part of the input which has not been yet read, i.e., that starts from the current input head position, and~$\alpha$ is the string on the pushdown store, where the topmost symbol is at the beginning, and~$(q,x,\alpha)\Nreaches{}(q',x',\alpha')$ denotes that from the configuration~$(q,x,\alpha)$ in one step a \pda~$\cal N$ can reach configuration~$(q',x',\alpha')$.
The reflexive and transitive closure of the relation~$\Nreaches{}$ is denoted as~$\Nreaches{*}$.

From~(a),
we can observe that for~$q,p\in Q$, $\sigma\in\Sigma\cup\{\varepsilon\}$, $x\in\Sigma^*$,
$A,C_1,C_2,\ldots,C_s\in\Gamma$, if~$s>0$ then:
\begin{eqnarray*}
&&(q,\sigma x, AC_1C_2\cdots C_s)\Mreaches{}(p,x,C_1C_2\cdots C_s)
\mbox{ if and only if }\\
&&([qA],\sigma x, C_1C_2\cdots C_s\bot)\Preaches{}([pC_1],x,C_2\cdots C_s\bot)\,,
\end{eqnarray*}
while, when~$s=0$:
\begin{eqnarray*}
(q,\sigma x, A)\Mreaches{}(p,x,\varepsilon)
\mbox{ if and only if }
([qA],\sigma x, \bot)\Preaches{}([p\bot],x,\varepsilon)\,.
\end{eqnarray*}
From~(b), we obtain that for~$q,p\in Q$, $\sigma\in\Sigma\cup\{\varepsilon\}$, $x\in\Sigma^*$,
$A,B_1,B_2,\ldots,B_k,C_1,C_2,\ldots,C_s\in\Gamma$ with~$k>0,s\geq 0$:
\begin{eqnarray*}
&&(q,\sigma x, AC_1C_2\cdots C_s)\Mreaches{}(p,x,B_1B_2\cdots B_kC_1C_2\cdots C_s)
\mbox{ if and only if }\\
&&([qA],\sigma x, C_1C_2\cdots C_s\bot)\Preaches{}([pB_1],x,B_2\cdots B_kC_1C_2\cdots C_s\bot)\,.
\end{eqnarray*}
Hence, to each sequence of consecutive configurations of~$\cal M$:
\begin{eqnarray*}
(q_1,w_1,A_1\alpha_1)\Mreaches{}(q_2,w_2,A_2\alpha_2)\Mreaches{}\cdots\Mreaches{}(q_k,w_k,A_k\alpha_k)
\end{eqnarray*}
corresponds the sequence of configurations of~${\cal M}'$:
\begin{eqnarray*}
([q_1A_1],w_1,\alpha_1\bot)\Mreaches{}([q_2A_2],w_2,\alpha_2\bot)\Mreaches{}\cdots\Mreaches{}([q_kA_k],w_k,A_k\bot)
\end{eqnarray*}
and vice versa. Notice that~$|A_i\alpha_i|=|\alpha_i\bot|$ for~$i=1,\ldots,k$. This implies that the two sequences contain
exactly the same number of turns. Furthermore, if the last configuration in the first sequence
is replaced by~$(q_k,w_k,\varepsilon)$, i.e., the last move pops~$A_{k-1}$ off the pushdown and~$\alpha_{k-1}=\varepsilon$, then the last configuration in the corresponing sequence becomes~$([q_k\bot],w_k,\varepsilon)$.

This allows to conclude that for each~$w\in\Sigma^*$,
$(q_0,w,Z_0)\Mreaches{*}(q,\varepsilon,\varepsilon)$ if and only if
$([q_0Z_0],w,\bot)\Preaches{*}([q\bot],\varepsilon,\varepsilon)$, i.e., the languages accepted by~$\cal M$ and~${\cal M}'$ coincide.
From the previous discussion we also conclude that~$\cal M$ has a computation accepting a string~$w$ in exactly~$t$ turns if and only if~${\cal M}'$ has a computation accepting~$w$ in exactly~$t$ turns.
\qed
\end{proof}

We are now able to prove a lower bound for the number of turns necessary to accept some
generalizations of languages~$\eq^k$, $k>0$.

\begin{lemma}\label{lemma:lowerBoundEQ}
	Let~$\Sigma=\Delta\cup\{a,b\}$ be an alphabet, with $a,b\notin\Delta$, and let~$\ma{M}$ be a \pda\ 
	accepting a language~$L\subseteq\Sigma^*$.
	There exists an integer~$N>0$ such that for each integer~$k>0$ and strings~$y_0,y_1,\ldots,y_k\in\Delta^*$,
	if the language~$L'=y_0\{a,b\}^*y_1\{a,b\}^*\cdots y_{k-1}\{a,b\}^*y_k\cap L$ is a subset 
	of~$y_0\eq^*y_1\eq^*\cdots y_{k-1}\eq^*y_k$ then, for~$n_1,n_2,\ldots,n_k\geq N'=N\cdot(|y_0y_1\cdots y_k|+1)$,
	each accepting computation
	of~$\ma{M}$ on input~$y_0a^{n_1}b^{n_1}y_1a^{n_2}b^{n_2}\cdots y_{k-1}a^{n_k}b^{n_k} y_k$ contains at least~$k$ turns.\footnote{%
Notice that the statement does not imply the existence of an accepting computation on~$y_0a^{n_1}b^{n_1}y_1a^{n_2}b^{n_2}\cdots y_{k-1}a^{n_k}b^{n_k} y_k$.
	}\,\footnote{%
	We use this lemma in the proofs of Theorems~\ref{th:nonRec}, \ref{th:sq}, \ref{th:loglog-lb}, and~\ref{th:lstar-lower}.
	Actually, we always obtain a language~$L'=y_0\{a,b\}^*y_1\{a,b\}^*\cdots y_{k-1}\{a,b\}^*y_k\cap L$
		which is a subset of~$y_0\eq y_1\eq\cdots y_{k-1}\eq y_k$. However, we prefer to leave the statement in this more general form that derives from the proof.
	}
\end{lemma}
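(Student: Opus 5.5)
By Lemma~\ref{lemma:normalForm} we may assume that $\ma{M}$ is in the normal form given there: every move either pops one symbol or pushes a (possibly empty) string on top of the current topmost symbol, which is left unchanged. The number of turns is preserved, so the lower bound transfers back to the original automaton. We then fix $N$ depending only on $\ma{M}$, say $N=(\#Q'\cdot\#\Gamma')^2\cdot c$ for a suitable constant $c$ coming from pumping arguments. We argue by contradiction: suppose some accepting computation $\cal C$ on $w=y_0a^{n_1}b^{n_1}y_1\cdots a^{n_k}b^{n_k}y_k$, with all $n_i\ge N'$, makes fewer than $k$ turns.

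The first step is a pigeonhole argument. The computation ${\cal C}$ splits into at most $k$ maximal phases, each a nonincreasing-then-nondecreasing stretch without turns inside; more precisely, between consecutive turns the height first goes down and then up. Since there are $k$ blocks $a^{n_i}b^{n_i}$ and fewer than $k$ turns, some block $a^{n_i}b^{n_i}$ is read entirely inside a segment of ${\cal C}$ containing no turn. Within such a segment the height behaves ``valley-like'': along the segment, after the first push no pop occurs.

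The second step pumps inside that block. Since $n_i\ge N'\ge N$, the part reading $a^{n_i}$ (or the part reading $b^{n_i}$) contains a long portion that is either purely non-pushing or purely non-popping. In the valley shape, either the $a$'s are read while the store is being popped or kept constant, or the $b$'s are read while it is pushed or kept constant. Using the normal form, the behaviour in a popping phase depends only on the state and the popped symbols, and in a pushing phase only on the state and the top symbol. So a repeated pair (state, top symbol) among more than $\#Q'\cdot\#\Gamma'$ positions inside $a^{n_i}$ (say) gives a factor $a^{h}$, $0<h\le N$, together with a computation segment that can be repeated or deleted while keeping a valid accepting computation. For a popping segment we also need the sequence of popped symbols to match, which we handle by choosing the repetition among heights where the relevant stack content is periodic, or by instead choosing the loop on the side where pushing happens. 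The result is an accepting computation on a string obtained from $w$ by changing $n_i$ $a$'s to $n_i+h$ $a$'s, or analogously on the $b$'s, while leaving everything else untouched.

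The resulting string lies in $y_0\{a,b\}^*y_1\cdots y_k$ and is accepted, so it belongs to $L'$. However, its $i$-th block $a^{n_i+h}b^{n_i}$ is not in $\eq^*$, which contradicts $L'\subseteq y_0\eq^*\cdots y_k$. The factor $|y_0\cdots y_k|+1$ in $N'$ is what guarantees that we can also handle the case in which the turnless segment must be located relative to the $y_j$ pieces, since moves on the $y_j$ may shift the phases.

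The main obstacle is the popping case: making the pumped loop compatible with the stack contents that are popped. I would address it by choosing, among the ends of the turnless segment, the side where the store grows (where pumping is free) or by using a standard ``surface'' argument that the popped symbols were pushed earlier, in a pushing region that can be pumped synchronously.
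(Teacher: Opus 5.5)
Your first step matches the paper: some block $a^{n_i}b^{n_i}$ is consumed by a stretch of the computation with no pop after a push, so either the $a$'s are read with non-increasing height or the $b$'s with non-decreasing height.

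\textbf{The gap is in the pumping step.} Suppose a pair (state, top symbol) repeats at two positions $j<\ell$ of a non-decreasing stretch, with $h_j<h_\ell$. This does \emph{not} give a segment that can be repeated or deleted. The symbols it pushes are later popped by the rest of the computation, since acceptance is by empty store. After you duplicate or remove the segment, that remainder, once it goes below height $h_\ell$, meets stack contents different from those it was built for. Nothing then guarantees that it still empties the store. So pumping on the ``growing'' side is not free; it is as constrained as on the popping side.

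The paper handles this by adding to the pigeonhole tuple the state of the first configuration, after $b^{n_i}$ has been consumed, at which the height returns to $h_j$. A repetition with $h_j<h_\ell$ then yields a vertical loop. This loop pumps $b^{\ell-j}$ \emph{together with} the factor $z$ consumed while the height descends from $h_\ell$ back to $h_j$. Only when $h_j=h_\ell$ is the store untouched and the loop genuinely free.

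Consequently, your claim that the new string differs from $w$ only inside the $i$-th block is false in general. The factor $z$ lies later in the input and may overlap some $y_m$. In that case the pumped string leaves $R=y_0\{a,b\}^*y_1\cdots\{a,b\}^*y_k$, and its acceptance contradicts nothing, because the hypothesis only constrains $L\cap R$.

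\textbf{Role of the factor $|y_0y_1\cdots y_k|+1$.} This is the real role of that factor in $N'$, which your proposal does not identify. The paper runs $\ma{M}$ in product with a finite automaton for $R$ having $|y_0\cdots y_k|+1$ states. It includes this automaton's state at both ends of the loop in the tuple: the one inside $b^{n_i}$, which is constant there, and the one at the return point. Hence the pumped computation is a computation of the product machine, and the pumped string stays in $R$. One then checks that it is not in $y_0\eq^*y_1\cdots\eq^*y_k$: since $z$ occurs after $b^{n_i}$, a maximal run $a^{n_i}$ is followed by a maximal run $b^{n_i+d}$ with $d>0$. This is also why $N=(\#Q)^2\cdot\#\Gamma$, with two states per tuple.

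\textbf{The popping case.} When the $a$'s are read without pushes, the argument is the mirror image: pair each stack level with the earlier configuration where it was built. Your suggestion of choosing heights where the stack content is ``periodic'' has no justification. Switching to ``the side where pushing happens'' is not always possible, since the whole block may be read while popping. Even when it is possible, it runs into the problem above.
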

\begin{proof}
	Without loss of generality, according to Lemma~\ref{lemma:normalForm}, we can suppose that for each transition of~$\cal M$, 
	of the form~$(q,\alpha)\in\delta(q,\sigma,A)$, with~$\alpha\neq\varepsilon$,
	the last symbol of~$\alpha$ is~$A$.
	
	From~$\ma{M}$, by applying a standard construction for the intersection with regular languages, we can obtain
	a \pda~$\ma{M}'$ accepting~$L'$, in such a way that there is a one--to--one correspondence between
	accepting computations of~$\ma{M}$ and of~$\ma{M}'$ on inputs belonging to~$L'$.
	An accepting computation in~$\ma{M}$ and the corresponding computation in~$\ma{M}'$ perform exactly
	the same number of turns.	
	
	We give a few details on~$\ma{M}'$ that will be used in the proof.
	We take a finite automaton~$A$ accepting the language~$R=y_0\{a,b\}^*y_1\{a,b\}^*\cdots y_{k-1}\{a,b\}^*y_k$.
	The transition graph of~$A$ consists of a path of~$|y_0y_1\cdots y_k|+1$ states,
	from the initial state to the final state. The concatenations of the letter labeling the arcs in this path is~$y_0y_1\cdots y_k$.
	The only loops in the graph are self loops, labeled with letters~$a$ and~$b$, that are located on the states corresponding
	to the end of the factor~$y_{i-1}$ and the beginning of~$y_i$, for~$i=1,\ldots,k$.
	Hence, the states of~$\ma{M}'$ are pairs~$(q,r)$ where~$q$ is a state of~$\ma{M}$ and~$r$ is a state~$A$.
	Let us take~$N=(\#Q)^2\cdot\#\Gamma$, where~$Q$ and~$\Gamma$ are the set of states and the working alphabet
	of~$\ma{M}$, respectively.
	
	Suppose that~$x=y_0a^{n_1}b^{n_1}y_1a^{n_2}b^{n_2}\cdots y_{k-1}a^{n_k}b^{n_k} y_k$, with~$n_1,n_2,\ldots,n_k\geq N'=N\cdot(|y_0y_1\cdots y_k|+1)$, 
	is accepted by~$\ma{M}$ using less
	than~$k$ turns. Then there also exists a computation~$\cal C$ of~$\ma{M}'$ accepting~$x$ using less than~$k$ turns.
	We will show that this leads to a contradiction.

	Since the number of turns in~$\cal C$ is less than~$k$, there must exists an index~$i$, $1\leq i\leq k$, such that
	no turns are performed while consuming the factor~$a^{n_i}b^{n_i}$.
	Hence, the part~${\cal C}'$ of~$\cal C$ consuming~$a^{n_i}b^{n_i}$ cannot contain any pop, after a push has been executed.
	This leads to consider two cases.
	
	If a push is executed in~${\cal C}'$ \emph{before} the input head reaches the factor~$b^{n_i}$
	then, while reading such a factor, the height of the pushdown store cannot decrease.
	For~$j=0,\ldots,n_i$, let us consider the integer~$h_j$ and the quintuple~$[q_jr_jA_jp_js_j]$ where, for~$j=0,\ldots,n_i$, 
	$h_j$ is the height of the pushdown store immediately after reading the prefix~$a^{n_i}b^j$ in~${\cal C}'$,
	$(q_j,r_j)$ is the state of the control of~$\ma{M}'$, $A_j$ is the symbol on the top of the store, while~$(p_j,s_j)$ is the
	state in the first configuration of~$\cal C$ after all the factor~$b^{n_i}$ has been consumed, 
	in which the height of the pushdown store is~$h_j$. (Roughly, while reading~$b^{n_i}$, the height of the pushdown store could
	increase over~$h_j$, hence at some point in the next part of computation, after some pop operations the height
	of the pushdown store should decrease again to~$h_j$.)
	We observe that, since in the above defined quintuples the first two components correspond to a part of the computation on the same factor~$b^{n_i}$,
	the component~$r_j$ corresponding to the state of~$A$ after reading~$b^j$ does not change. Hence,
	the number of different quintuples that we are considering for the given~$b^{n_i}$ is bounded by the number of triples 
	belonging to~$(\#Q)^2\cdot\#\Gamma$, multiplied by the number of possible values of the component~$s_j$, namely~$|y_0y_1\cdots y_k|+1$
	Since~$n_i\geq N'$, the number of quintuple we are considering exceeds such an amount. 
	So there are two indices~$j,\ell$, with~$0\leq j < \ell \leq n_i$ such that~$[q_jr_jA_jp_js_j]=[q_{\ell}r_{\ell}A_{\ell}p_{\ell}s_{\ell}]$.
	Furthermore~$h_j\leq h_{\ell}$. 
		
	If~$h_j<h_{\ell}$ then this defines a `vertical loop'~\cite{PP23}, which consumes an input factor~$b^{\ell-j}$ in the increasing phase
	and an input factor~$z$ in the decreasing phase. This loop can be used to pump the computation~$\cal C$ in order
	to obtain a computation of~$\ma{M}'$ accepting a string~$x'\notin L'$, which is a contradiction.
	
	Otherwise~$h_j=h_{\ell}$, thus implying that the non-empty factor~$b^{\ell-j}$
	is consumed between two repetitions of the state~$q_j$, without changing the  pushdown store in between.
	This also allow to pump the computation~$\cal C$ by doubling the part between these two repetitions,
	obtaining again an accepting computation on a string~$x'\notin L'$, so a contradiction.
	
	If no push is executed in~${\cal C}'$, \emph{before} the input heads reaches the factor~$b^{n_i}$, then the 
	stack height can only decrease or remain stationary while reading~$a^{n_i}$. With an argument symmetrical to the previous one we obtain a contradiction. 
\qed
\end{proof}

\section{Undecidability and Non-Recursive Trade-Offs}
\label{sec:undec}

Ginsburg and Spanier proved that it is decidable whether a \pda\ is finite turn or whether it is~$k$-turn
for any given integer~$k\geq 0$~\cite{GS66}.
We underline that all the accepting computations are considered to count the turns.
So, for each accepted input, the bound must be satisfied by the most expensive accepting computation.

Here, we consider the number of turns which are \emph{sufficient} to accept a given input. Hence, the bound
must be satisfied, for each accepted input, by the accepting computation using the lowest number of turns.
With this measure, we obtain completely different results. In particular, 
all the above mentioned questions become undecidable, except acceptance in~$0$ turns.
We will also prove a non-recursive trade-off result, by showing that in the case of acceptance in a number of
turns bounded by a value~$k$, constant with respect to the length of the input, such a value cannot be bounded
by any recursive function in the size of the description of the machine.
This will allow to prove that the trade-offs between \ocas\ and \pdas\ accepting in a finite number of turns and
finite-turn \pdas\ are not recursive.

\smallskip

Let us start by presenting the undecidability results, that are obtained by adapting a well-known
technique introduced by Hartmanis and which is based on encodings of single-tape Turing machine computations~\cite{Har67}.
	
Configurations of a single-tape Turing machine~$\ma{T}$ with state set~$Q$ and tape alphabet~$\Gamma$
are denoted, in a standard way, as strings in~$\Gamma^*Q\Gamma^*$.
A computation is encoded as a string~$\alpha_1\$\alpha_2\$\cdots\alpha_m\$$,
where the blocks~$\alpha_i$, $i = 1, \ldots, m$,
are encodings of configurations of~$\ma{T}$
and~$\$\notin Q\cup\Gamma$ is a delimiter.
Hence,
the (encoding of a) \emph{valid computation} of~$\ma{T}$ on input~$w$
is a string~$\alpha_1\$\alpha_2\$\cdots\alpha_m\$$,
for some integer~$m\geq 1$, such that:
\begin{enumerate}
	\item\label{en:conf}$\alpha_i\in\Gamma^*Q\Gamma^*$,
		i.e., $\alpha_i$ encodes a configuration of~$\ma{T}$, $i=1,\ldots,m$;
	\item\label{en:init}$\alpha_1$ encodes the initial configuration on input~$w$
	 by the string~$q_Iw$, where~$q_I$ is the initial state of~$\ma{T}$;
	\item\label{en:halt}$\alpha_m$ encodes a halting configuration of~$\ma{T}$,
		namely a configuration from which no move is possible;
	\item\label{en:step}$\alpha_{j+1}$ is reachable in one step from~$\alpha_j$,
		$j=1,\ldots,m-1$.
\end{enumerate}
Let us denote by~$\Valid{\ma{T}\!,w}$ the set of valid computations of~$\ma{T}$ on input~$w$
and by~$\Valid{\ma{T}\!,w}^c$ its complement.

\begin{lemma}
	\label{lem:oca invalid}
	Let~$\ma{T}$ be a  Turing machine and~$w$ be a fixed string.\linebreak[4]
	The set~$\Valid{\ma{T}\!,w}^c$ is accepted by a $1$-turn \oca.
\end{lemma}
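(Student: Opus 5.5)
The plan is to write $\Valid{\ma{T}\!,w}^c$ as a finite union of simpler ``error'' languages. For each one I will build a $1$-turn \oca\ (or even a finite automaton), and then combine them by an initial nondeterministic choice. A string $x$ over $\Gamma\cup Q\cup\{\$\}$ fails to be a valid computation exactly when at least one of the following holds:
\begin{enumerate}
	\item[(i)] $x$ is not in $(\Gamma^*Q\Gamma^*\$)^+$;
	\item[(ii)] the first block is not $q_Iw$;
	\item[(iii)] the last block is not a halting configuration;
	\item[(iv)] there is an index $j$ such that $\alpha_{j+1}$ is not reachable in one step from $\alpha_j$.
\end{enumerate}
Conditions (i)--(iii) define regular languages, since $w$ is fixed and halting is determined by the state symbol and the scanned symbol. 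A finite automaton for each of them is a $0$-turn \oca\ that never touches its counter and pops $Z_0$ at the end. The automaton for the union first guesses which condition to verify and then simulates the corresponding device. This keeps every accepting computation within one turn, and it remains an \oca.

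The core is condition (iv), and here I will use the standard Hartmanis-style locality argument. Whether $\alpha_{j+1}$ follows from $\alpha_j$ is determined by windows of three consecutive symbols. Position $p$ of $\alpha_{j+1}$ depends only on positions $p-1,p,p+1$ of $\alpha_j$, with care at the borders, where the configuration may grow by a blank or the head may move onto a new cell. So $\alpha_{j+1}$ is not a successor of $\alpha_j$ iff one of three things holds: $|\alpha_{j+1}|$ differs from the lengths allowed by $\alpha_j$, or $\alpha_j$ is halting, or there is some position $p$ where the symbol of $\alpha_{j+1}$ is inconsistent with the window of $\alpha_j$ around $p$.

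The \oca\ handles this in one turn as follows. It nondeterministically skips to some block $\alpha_j$ and guesses a position $p$ inside it. While scanning the prefix of $\alpha_j$ before the window it increments the counter once per symbol, and it remembers the three-symbol window in the finite control. It then reads the rest of $\alpha_j$ and the $\$$ without touching the counter. In $\alpha_{j+1}$ it decrements once per symbol until the counter reaches zero, which is the turn. At that point it checks, in the finite control, that the symbol at the corresponding position, possibly with an offset of one fixed via the finite control, is not the one forced by the remembered window. Finally it reads the remainder of the input without using the counter and empties the store.

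Length mismatches are handled the same way: the automaton counts the whole of $\alpha_j$ up and the whole of $\alpha_{j+1}$ down, and it accepts if the difference is not in the small allowed set, which is one of $\{0,1\}$ depending on whether the head moves off the right end. Detecting that the counter goes past zero early, or is still positive at the $\$$, is done by checking the bottom marker $Z_0$. All of these branches make a single increasing phase followed by a single decreasing phase, so every accepting computation has at most one turn.

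The main obstacle is purely bookkeeping. I need the window checks plus the length checks to capture \emph{exactly} the failures of condition~(iv), including the edge cases: the head at the leftmost or rightmost cell, and configurations growing by one blank. I also need the index alignment between $\alpha_j$ and $\alpha_{j+1}$ to be fixed correctly, i.e.\ comparing position $p$ with position $p$, within a constant shift resolved nondeterministically. I would handle this by fixing the convention that configurations are padded so their length never decreases and grows by at most one per step. Under that convention a local rule set of allowed triples-to-symbol is finite and fully determines validity, given equal or one-larger length. This makes (iv) equivalent to ``some window violates the rule, or the lengths are wrong.''
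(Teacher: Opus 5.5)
Your proposal is correct and follows the paper's proof. You guess which of the four validity conditions fails, check the first three with the finite control alone, and for the step condition guess a block $\alpha_j$ and a mismatching position, counting up on $\alpha_j$ and down on $\alpha_{j+1}$ to align the two positions with a single turn.

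Your explicit handling of length mismatches and border cases only fills in details the paper leaves implicit. One caveat: your rule that position $p$ of $\alpha_{j+1}$ is forced by positions $p-1,p,p+1$ of $\alpha_j$ presumes $\ma{T}$ is deterministic, which is the case the paper actually uses. For a nondeterministic $\ma{T}$ you would instead compare the whole bounded window around the state symbol jointly against the finitely many possible outcomes, and the same one-turn technique handles that.
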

\begin{proof}
	The language~$\Valid{\ma{T}\!,w}^c$ is accepted by a \oca~$\ma{M}_{\ma{T}\!,w}$ using the following strategy.
	Given~${\cal D}=\beta_1\$\beta_2\$\cdots\$\beta_m\$$,
	with~$\beta_j\in(Q\cup\Gamma)^*$, $j=1,\ldots,m$,
	in order to decide whether~${\cal D}\in~\Valid{\ma{T}\!,w}^c$,
	$\ma{M}_{\ma{T}\!,w}$ guesses which one among Conditions~\ref{en:conf}, \ref{en:init}, \ref{en:halt}, and  \ref{en:step} is not satisfied.
	For the first three conditions,
	the verification of the guess is done by only using the finite control.
	For the last condition,
	$\ma{M}_{\ma{T}\!,w}$ can guess a block~$j$, $1\leq j< m$, and an ``un-matching'' position between blocks~$j$ and~$j+1$.
	The verification uses the counter to locate the corresponding positions in the two blocks.
\qed
\end{proof}

We are now ready to prove the main result of this section:

\begin{theorem}\label{th:oc-undec}
	The following properties are undecidable:
	\begin{itemize}
	\item whether a \oca\ accepts in a finite number of turns,
	\item whether a \oca\ accepts in~$k$ turns, for a given integer~$k>0$.
	\end{itemize}
\end{theorem}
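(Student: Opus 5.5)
We reduce from the halting problem, following Hartmanis' technique: given a Turing machine~$\ma{T}$ and an input~$w$, we effectively build a \oca~$\ma{N}$ that accepts in a finite number of turns (indeed in~$k$ turns) if and only if~$\ma{T}$ halts on~$w$. Since halting is undecidable, both properties in the statement follow. We work over the alphabet~$\Sigma=\Delta\cup\{a,b\}$, where~$\Delta=Q\cup\Gamma\cup\{\$\}$ is the alphabet of the encodings. The language we use is
\[
L=\Valid{\ma{T}\!,w}^c\,\Delta^*\{a,b\}^*\ \cup\ \Delta^*\,\eq^*,
\]
or a variant in which the $\{a,b\}$-part is appended to strings in~$\Delta^*$. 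The \oca~$\ma{N}$ nondeterministically chooses between two strategies. In the first, it reads the $\Delta$-prefix with the $1$-turn \oca\ of Lemma~\ref{lem:oca invalid} and ignores the suffix. In the second, it checks the suffix with~$\meqs$, using only the finite control on the prefix. To keep this a genuine \oca, the first strategy must empty its counter before the suffix starts; we handle this with an end-of-block marker and an explicit emptying of the counter, taking care to keep the bottom symbol so that the counter is still usable. Two consequences follow. First, every string in the second part of~$L$ whose prefix is not a valid computation is accepted in at most~$1$ turn. Second, strings whose prefix is the valid computation~$x_0$ (if it exists) can be accepted only through~$\meqs$.

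\emph{If~$\ma{T}$ does not halt on~$w$}, then~$\Valid{\ma{T}\!,w}$ is empty, so every accepted string is accepted in~$1$ turn. Hence~$\ma{N}$ accepts in~$1\le k$ turns, and in particular in a finite number of turns.

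\emph{If~$\ma{T}$ halts on~$w$}, let~$x_0$ be its unique valid computation. Then~$L\cap x_0\{a,b\}^*=x_0\eq^*$. We apply Lemma~\ref{lemma:lowerBoundEQ} with~$y_0=x_0$ and~$y_1=\cdots=y_m=\varepsilon$. Strictly speaking we need~$y_0\{a,b\}^*y_1\cdots\cap L\subseteq y_0\eq^*\cdots$, and here~$x_0\{a,b\}^*\cap L=x_0\eq^*$ is contained in~$x_0\eq^*\eq^*\cdots$, so the hypothesis holds. The lemma then shows that~$x_0a^{n}b^{n}\cdots a^nb^n$ (with~$m$ blocks and~$n$ large) requires~$m$ turns. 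Since~$m$ is arbitrary, the number of turns is unbounded, so~$\ma{N}$ accepts neither in a finite number of turns nor in~$k$ turns for any fixed~$k>0$.

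The main obstacle is making the union a genuine \oca\ whose ``cheap'' branch costs exactly one turn, while keeping the lower-bound hypothesis of Lemma~\ref{lemma:lowerBoundEQ} exact on the slice~$x_0\{a,b\}^*$. In particular, we must prevent~$\ma{N}$ from accepting~$x_0$ followed by a non-$\eq^*$ suffix, and the first strategy must reject every valid prefix. Both are guaranteed by construction, since Lemma~\ref{lem:oca invalid} accepts exactly the complement of~$\Valid{\ma{T}\!,w}$.
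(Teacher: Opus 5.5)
Your reduction is essentially the paper's. The automaton nondeterministically either runs \meqs\ on the $\{a,b\}$-part or runs the $1$-turn \oca\ of Lemma~\ref{lem:oca invalid} on the encoding part, so the cheap branch covers every input exactly when no valid computation exists. Putting the encoding first, dropping the separator, and reducing from general rather than blank-tape halting are cosmetic changes.

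Two slips must be fixed, however. First, the displayed language is wrong as written. Since $\varepsilon\notin\Valid{\ma{T}\!,w}$, the set $\Valid{\ma{T}\!,w}^c\,\Delta^*$ already contains all of $\Delta^*$. So your $L$ contains $\Delta^*\{a,b\}^*$, and the identity $L\cap x_0\{a,b\}^*=x_0\eq^*$, on which the halting case rests, is false. What your machine actually accepts, and what you need, is $\{uv\mid u\in\Delta^*\setminus\Valid{\ma{T}\!,w},\ v\in\{a,b\}^*\}\cup\Delta^*\eq^*$. That is, the automaton of Lemma~\ref{lem:oca invalid} must be run on the whole maximal $\Delta$-prefix; no marker is needed to find its end, since $a,b\notin\Delta$. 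Second, your opening sentence reverses the equivalence. As your own case analysis shows, $\ma{N}$ accepts in a finite number of turns (indeed in one turn) if and only if $\ma{T}$ does \emph{not} halt on $w$. Undecidability is unaffected, but the statement should be corrected.

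The step where you genuinely differ is the lower bound when $\ma{T}$ halts. The paper argues directly on the constructed machine: on $(ab)^k\$\hat y$ with $\hat y$ valid, the branch checking invalidity rejects. The only accepting computation left is the deterministic simulation of \meqs, which makes exactly $k$ turns. You instead apply Lemma~\ref{lemma:lowerBoundEQ} with $y_0=x_0$ and $y_1=\cdots=y_m=\varepsilon$ to inputs $x_0(a^nb^n)^m$ with $n\geq N'$. Your verification of the hypothesis is correct and the lemma allows empty $y_i$'s, so this works. It is more machinery than the statement needs, since acceptance in a bounded number of turns is a property of the specific automaton. In exchange it gives a machine-independent conclusion: when $\ma{T}$ halts, no \pda\ at all accepts $L$ in a bounded number of turns, so $L$ is not ultralinear. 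When $\ma{T}$ does not halt, $L=\Delta^*\{a,b\}^*$ is regular.
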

\begin{proof}
	We give a reduction from the blank-tape halting problem, namely the problem of deciding whether a Turing machine
	halts on the empty tape.
	Given a Turing machine~$\ma{T}$ with state set~$Q$ and working alphabet~$\Gamma$, s.t.~$\$\notin Q\cup\Gamma$,
	let us consider the language
	\[L=\{x\$y\mid x\in\{a,b\}^*, y\in(Q\cup\Gamma\cup\{\$\})^*,\mbox{ and }(x\in\eq^*\mbox{ or } y\notin\Valid{\ma{T}\!,\varepsilon})\}\,.\]
	We build a \oca~$\ma{M}$ that given an input~$z$ of the form~$x\$y$, with~$x\in\{a,b\}^*$, decides if~$z\in L$
	by nondeterministically choosing between verifying that~$x\in\eq^*$ (this is done by simulating the \oca~$\meqs$),
	or verifying that~$y\notin\Valid{\ma{T}\!,\varepsilon}$, which is done by simulating a $1$-turn \oca\ accepting $\Valid{\ma{T}\!,\varepsilon}^c$.
	
	If~$\ma{T}$ halts on input~$\varepsilon$, then fix~$\hat{y}\in\Valid{\ma{T}\!,\varepsilon}$. For each~$k>0$, 
	the only accepting computation on input~$(ab)^k\$\hat{y}$ uses~$k$ turns.
	So the number of turns used by~$\ma{M}$ to accept cannot be bounded by any constant.
	
	However, if~$\ma{T}$ does not halt on input~$\varepsilon$, then $\Valid{\ma{T}\!,\varepsilon}$ is empty and, hence,
	each input~$z=x\$y$, with~$x\in\{a,b\}^*$, is accepted by~$\ma{M}$ in a computation that verifies 
	that~$y\in\Valid{\ma{T}\!,\varepsilon}^c$ using at most one turn.
	
	Hence, we can conclude that~$\ma{M}$ accepts in a finite number of turns if and only if~$\ma{T}$ does not halt on input~$\varepsilon$,
	which is undecidable.
	So, it cannot be decided if a \oca\ accepts in a finite number of turns.
	
	The second statement follows from the fact that when the above-described machine~$\ma{M}$ accepts in a finite number of turns, 
	then it accepts in one turn and, so, in~$k$ turns for each~$k>0$.
\qed
\end{proof}

We now prove that restricting to~$k=0$, the second property considered in Theorem~\ref{th:oc-undec} becomes decidable even in the case
of \pdas.
We point out that acceptance in~$0$ turns implies that the pushdown store is useless, hence the language accepted by the machine is regular.
However, the property we are investigating is different from deciding whether the accepted language is regular, which is 
undecidable~\cite{BPS61}. Indeed, there are \pdas\ that, even making use of turns, accept regular languages.

\begin{theorem}
\label{th:0turns}
	It is decidable whether a \pda~$\ma{M}$ accepts in~$0$ turns.
\end{theorem}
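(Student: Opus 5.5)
Observe first that \(\ma{M}\) accepts in \(0\) turns exactly when \(L(\ma{M})=L_0\), where \(L_0\subseteq L(\ma{M})\) is the set of strings having an accepting computation with no turn. The plan is to show that \(L_0\) is effectively regular. Then \(L(\ma{M})\subseteq L_0\) becomes a containment of a context-free language in a regular language. That question is decidable: complement the regular language, intersect it with \(L(\ma{M})\) (still context-free, by the standard product construction), and test the result for emptiness.

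To build an automaton for \(L_0\), I would first apply Lemma~\ref{lemma:normalForm}. It lets us assume every non-popping transition keeps the old top symbol at the bottom of the pushed string, and it preserves the exact number of turns of each accepting computation, so \(L_0\) is unchanged. A turn-free computation begins with height \(1\) and ends with height \(0\), so its height profile is nonincreasing after the first push is followed by a pop, or more precisely: the computation is a sequence of phases whose heights satisfy "never a strict increase followed later, after a plateau, by a strict decrease." Hence the computation has two parts. In the first part the height never strictly decreases, so it only pushes or replaces. In the second part the height never strictly increases, so it only pops or replaces. A height change is needed somewhere, since we must reach empty store.

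In the normal form a replacing move leaves the stack content unchanged except possibly for the top symbol, and we can make this precise by also storing the top symbol in the control. The whole computation is then simulated by a finite automaton \(\ma{A}\) with two phases. In phase~1 it tracks the state and the current top symbol. In phase~2 it nondeterministically guesses the stack symbols as they are popped, and it must check that they coincide with what was pushed in phase~1. This check is the main obstacle, because the pushed contents can be unboundedly long.

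To get around it, I would use the fact that popping only exposes symbols pushed during phase~1. In the normal form every push keeps the old top at the bottom of the pushed string. Combined with the turn-free shape, this means the total number of pops equals the total number of pushed symbols plus one, while no push occurs after any pop. So I would convert the problem into a \(1\)-turn-free check in a different way: the pushdown content at the switch point is produced by phase~1, and phase~2 consumes it top to bottom. This is exactly the structure of a \(1\)-turn computation, so it looks non-regular.

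Here I realize the subtlety: a turn occurs whenever a push is followed by a pop, and only plateaus may separate them. So a turn-free accepting computation can contain no push at all. The first move from height \(1\) must not increase the height (replacements are allowed), and later heights can never increase, because any increase must eventually be undone by a pop, which would create a turn. Therefore \(L_0\) is accepted by the finite automaton that simulates \(\ma{M}\) using only transitions that replace the single symbol \(Z_0\) or pop it. Its states are pairs (state, top symbol) plus a final "empty" state. This automaton is effectively constructible, and the containment test described at the start then decides the property. The only step that needs care is verifying that any height increase in an accepting computation forces a turn, which follows by taking the last increase before the first subsequent decrease.
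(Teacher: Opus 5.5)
Your final argument is correct and is essentially the paper's proof. You show that any height increase in an accepting computation forces a turn (via the \emph{last increase before the first subsequent decrease}). Hence zero-turn accepting computations use only height-preserving moves followed by a final pop, a finite automaton $\ma{A}$ can simulate them, and the question reduces to the decidable emptiness of $L(\ma{M})\cap L(\ma{A})^c$. The paper builds the same automaton after passing to the normal form of Lemma~\ref{lemma:normalForm}; you instead track the top symbol in the finite control directly.

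You should delete the exploratory two-phase paragraph in the middle. Its claim that a turn-free computation may push and later pop is false, since a push followed by a pop with only plateau moves in between is exactly a turn, and it contradicts your own conclusion.
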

\begin{proof}
Let us suppose that the start symbol of the pushdown store of~$\ma{M}$ is~$\bot$. We can suppose that~$\bot$
is removed only by a $\varepsilon$-transitions, so leading to acceptance exactly when all the
input has been scanned (this can by obtained, e.g., using the construction in Lemma~\ref{lemma:normalForm}).

		By removing from~$\ma{M}$ all the transitions making push or pop operations and choosing as final all the states of~$\ma{M}$ from which~$\bot$ is popped off the stack, we obtain a finite automaton~$\ma{A}$ accepting
		all the strings that are accepted by~$\ma{M}$ without using any turn. 
		Then~$\ma{M}$ accepts in $0$ turns if and only if~$L(\ma{A})=L(\ma{M})$ that, being~$L(\ma{A})\subseteq L(\ma{M})$,
		is equivalent to~$L(\ma{M})\cap(L(\ma{A}))^c=\emptyset$.
		Since context-free languages are effectively closed by intersection with regular languages and
		emptiness is decidable for context-free languages, we conclude that the property is decidable.
\qed
\end{proof}

%

The last part of this section is devoted to investigate size trade-offs.

%
\begin{theorem}
\label{th:nonRec}
	There exists a function~$T:\IN\rightarrow\IN$ that cannot be bounded by any recursive function, such that
	for infinitely many integers~$s$ there exists a context-free
	language~$L_s$ having a description of size~$s$ such that:
	\begin{itemize}
		\item There exists a \oca\ accepting~$L_s$ in~$T(s)$ turns.\footnote{We point out that the parameter~$s$ is the
		size of the description of the language~$L_s$. Hence, $T(s)$ is constant with
		respect to the input length. This means that for each string in~$L_s$ the \oca\
		   has an accepting computation making at most~$T(s)$ turns.}
		\item Each \pda\ accepting~$L_s$ uses at least~$T(s)$ turns to accept strings of length~$n$, for infinitely many integers~$n$.
	\end{itemize}
\end{theorem}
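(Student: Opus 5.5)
We adapt the construction in the proof of Theorem~\ref{th:oc-undec}: the number of turns will be tied to the length of a valid computation of a Turing machine. Let $\ma{T}$ range over Turing machines that halt on the empty tape. For such a $\ma{T}$, let $\hat{y}_{\ma{T}}$ be the unique string of $\Valid{\ma{T}\!,\varepsilon}$; it is unique because $\ma{T}$ is deterministic. Define
\[T(s)=\max\{\,|\hat{y}_{\ma{T}}| \mid \ma{T} \mbox{ halts on } \varepsilon \mbox{ and the language built from } \ma{T} \mbox{ has a description of size} \leq s\,\}.\]
This is a busy-beaver-type function, so it cannot be bounded by any recursive function. Otherwise the blank-tape halting problem would be decidable: given $\ma{T}$, compute the size $s$ of the description, simulate $\ma{T}$ until its computation encoding exceeds the bound on $T(s)$, and answer no if it has not halted by then.

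For each such $\ma{T}$ consider the language
\[L_{\ma{T}}=\{x\$y\mid x\in\{a,b\}^*,\ y\in(Q\cup\Gamma\cup\{\$\})^*,\ (y\notin\Valid{\ma{T}\!,\varepsilon} \mbox{ or } x\in\eq^{k} \mbox{ for some } k\leq |y|)\}.\]
The first alternative costs one turn (Lemma~\ref{lem:oca invalid}). For the second, the \oca\ simulates $\meqs$ on $x$ and then checks $k\leq|y|$ while reading $y$. This check must be done without extra turns. One option is to mark in the finite control that a turn of $\meqs$ occurred and afterwards consume one symbol of $y$ per block. A cleaner option, which I would adopt, is to change the format to $x=a^{n_1}b^{n_1}c\,a^{n_2}b^{n_2}c\cdots$ and require $\occurr{x}{c}\leq|y|$. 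Then the condition becomes a pure length comparison, which the finite control can enforce by pairing the $c$'s with symbols of $y$ in an interleaved way only if the encoding is shuffled; since the counter sits at $0$ between blocks, it can be reused after the last block. The description of $L_{\ma{T}}$, say through this \oca, has size linear in the size of $\ma{T}$. So the sizes form an infinite set, and the maxima defining $T$ are realized.

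\textbf{Upper bound.} If $y\neq\hat y_{\ma{T}}$, the string is accepted in one turn. If $y=\hat y_{\ma{T}}$, then $x$ must lie in $\eq^k$ with $k\leq|\hat y_{\ma{T}}|\leq T(s)$, and it is accepted in $k$ turns. Thus the \oca\ accepts $L_s$ in $T(s)$ turns, taking $\ma{T}$ to be a maximizer for $s$.

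\textbf{Lower bound.} Set $k=|\hat y_{\ma{T}}|$ and apply Lemma~\ref{lemma:lowerBoundEQ} with $\Delta=Q\cup\Gamma\cup\{\$\}$ (plus $c$ if the block separator is used), $y_0=\cdots=y_{k-1}=\varepsilon$ (or $c$) and $y_k=\$\hat y_{\ma{T}}$. The intersection of $L_s$ with $y_0\{a,b\}^*\cdots y_k$ is contained in $y_0\eq^*\cdots y_k$, as the lemma requires. Hence every \pda\ for $L_s$ needs at least $k$ turns on $a^{n_1}b^{n_1}\cdots a^{n_k}b^{n_k}\$\hat y_{\ma{T}}$ for all large $n_i$. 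These strings have infinitely many lengths, which gives the second item.

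The main obstacle is the definition of $T$. Defining it via the maximizing $\ma{T}$ gives equality with the upper bound, but the lower bound is then obtained for that particular $\ma{T}$, not uniformly over all $\ma{T}$ of size $s$. This should be fine because $L_s$ is chosen as the language of the maximizer, but the bookkeeping has to keep these two choices consistent.

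A second delicate point is keeping the check $k\leq|y|$ turn-free without breaking the hypothesis of Lemma~\ref{lemma:lowerBoundEQ}. If this proves awkward, I would drop the bound altogether and use the $\eq^*$ language from Theorem~\ref{th:oc-undec}. That would be wrong, however, since acceptance would then be unbounded. A safer fallback is to make the number of blocks exactly $|\hat y_{\ma{T}}|$ by requiring each block $a^{n}b^{n}$ to be immediately followed by one symbol of $y$, i.e.\ to interleave the blocks with the computation encoding. Lemma~\ref{lemma:lowerBoundEQ} is designed to handle exactly this setting, with the factors $y_i$ being the symbols of $\hat y_{\ma{T}}$.
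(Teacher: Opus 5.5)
Your lower bound via Lemma~\ref{lemma:lowerBoundEQ} is fine. So is your busy-beaver definition of $T$, provided you restrict $s$ to the infinitely many points where $T$ strictly increases, so that the maximizing machine yields a description of size exactly $s$.

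The gap is in the upper bound for the language you say you would adopt, $L_{\ma{T}}=\{x\$y \mid y\notin\Valid{\ma{T}\!,\varepsilon} \mbox{ or } x\in\eq^k \mbox{ with } k\leq|y|\}$. You need a \oca\ of size polynomial in $\ma{T}$ that checks $k\leq|y|$ while its single counter is busy verifying every block $a^{n_i}b^{n_i}$ of $x$. None of the mechanisms you sketch does this:
\begin{itemize}
\item The finite control cannot record the number of blocks (or of $c$'s), since it is unbounded.
\item The counter is back at $0$ after every block, so when $\$$ is reached nothing about $k$ survives. ``Reusing the counter after the last block'' therefore leaves nothing to compare $|y|$ against.
\item Consuming one symbol of $y$ per block is impossible, because all of $y$ comes after all of $x$.
\end{itemize}
A \pda\ could do it by pushing a marker below each block, but the statement asks for a \oca. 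A \oca\ that hard-wires $\hat y_{\ma{T}}$ and counts blocks up to $|\hat y_{\ma{T}}|$ in its states does accept $L_{\ma{T}}$. However, its size is not recursively bounded in $s$. So it is not the linear-size description you claim, and it is useless for Corollary~\ref{cor:tradeOff}.

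Your ``safer fallback'' is the correct repair, and it is essentially the paper's proof. The paper uses $\pNotValid{\ma{T},\varepsilon}$, where each delimiter $\$$ of an encoding $\alpha_1\$\alpha_2\$\cdots\alpha_m\$$ is replaced by a block $z_i\in\{a,b\}^+$. A string is accepted if either of two conditions holds:
\begin{itemize}
\item the underlying encoding is not valid, checked with one turn as in Lemma~\ref{lem:oca invalid}, skipping the $z_i$'s;
\item every $z_i$ lies in $\eq$, checked with $m$ turns by simulating \meq\ on each block.
\end{itemize}
Interleaving removes any need to count. The number of blocks is fixed by the format, and when the underlying encoding is the valid computation, every block must be in $\eq$. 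Hence exactly $m$ turns are used, or $|\hat y_{\ma{T}}|$ turns with your per-symbol interleaving. Lemma~\ref{lemma:lowerBoundEQ} then applies directly, with the configurations $\alpha_i$ (or the single symbols of $\hat y_{\ma{T}}$) as the factors $y_i$. So drop the condition $k\leq|y|$ altogether and present the interleaved language as the construction.
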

\begin{proof}
The argument is derived from~\cite[Prop.~7]{MF71}.
Let us start by introducing the following modification of the complement of~$\Valid{\ma{T}\!,w}$, in which 
each occurrence of the delimiter~$\$$ is replaced by a string in~$\{a,b\}^+$. 
For technical reasons and without loss of generality, let us suppose~$a,b\notin Q\cup\Gamma\cup\{\$\}$ where, as usual, $Q$ is the state set, $\Gamma$ is the working alphabet, and~$\$\notin Q\cup\Gamma$:
\begin{eqnarray*}
\pNotValid{\ma{T}\!,w} &=&\{\alpha_1z_1\alpha_2z_2\cdots \alpha_mz_m  \mid m>0,\\ 
			& &~~~~\alpha_i\in(\Gamma\cup Q)^+, z_i\in\{a,b\}^+,~i=1,\ldots,m, \mbox{ and:}\\
			& &~~~~\mbox{\rm(a)~} \alpha_1\$\alpha_2\$\cdots\alpha_m\$\in\Valid{\ma{T}\!,w}^c, \mbox{ or }\\
			& &~~~~\mbox{\rm(b)~} \forall i, 1\leq i\leq m,\,  z_i\in\eq\,\}\,.
\end{eqnarray*}
	By adapting the arguments used to prove Theorem~\ref{th:oc-undec},
	we define a \oca~$\ma{M}$ that accepts 
	$\pNotValid{\ma{T},\varepsilon}$ with the following strategy.
	At the beginning of the computation, on input~$x=\alpha_1z_1\alpha_2z_2\cdots\alpha_mz_m$,~$\ma{M}$
	nondeterministically decides to verify one of the two conditions~(a) and~(b). 
	Condition~(a) can be tested, as described in the proof of Theorem~\ref{th:oc-undec}, using at most one turn.
	To test condition~(b), the input is scanned while simulating the \oca~\meq\ on each block~$z_i$. This uses exactly~$m$ turns.
	We point out that the size of the description of~$\ma{M}$ is polynomial in the number of states
	of the given machine~$\ma{T}$.
	
	Let us consider any halting single-tape deterministic Turing machine~$\ma{T}_s$ with a description of size~$s$.
	Let~${\cal C}_s$ be the encoding of the computation of~$\ma{T}_s$ on~$\varepsilon$. It is well known that
	the length of~${\cal C}_s$ cannot be bounded by any recursive function in the number of states of~$\ma{T}_s$ and, so, in~$s$.
	
	Let us denote by~$\ma{M}_s$ the \oca\ accepting~$\pNotValid{\ma{T}_s,\varepsilon}$ obtained according to the previous construction.
	
	Suppose that~$\ma{T}_s$ halts on~$\varepsilon$. Being~$\ma{T}_s$ deterministic, 
	there is only one computation~${\cal C}_s$ on~$\varepsilon$.
	This computation is described by a string~$\alpha_1\$\alpha_2\$\cdots\alpha_m\$\in\Valid{\ma{T}_s,\varepsilon}$.
	This implies that each string~$x=\alpha_1z_1\alpha_2z_2\cdots\alpha_mz_m$, with~$z_i\in\eq$, for $i=1,\ldots,m$, satisfies 
	\emph{only} condition~(b). Hence, it requires~$m$ turns to be accepted, regardless its length.
	We take~$T(s)=m$.
	Since, the number~$m$ of configurations occurring in~${\cal C}_s$ cannot be bounded by any recursive function in~$s$,
	it follows that the function~$T(s)$ cannot be bounded by any recursive function in~$s$.
	
	To prove the lower bound on the number of turns necessary to accept~$L_s$ by any \pda, we consider again the encoding~$\alpha_1\$\alpha_2\$\cdots\alpha_m\$$ of
	 the computation~${\cal C}_s$ on~$\varepsilon$.
	 The intersection of~$\alpha_1\{a,b\}^*\alpha_2\{a,b\}^*\cdots\alpha_m\{a,b\}^*$ with~$\pNotValid{\ma{T}_s,\varepsilon}$ 
	 coincides with~$\alpha_1\eq\alpha_2\eq\cdots\eq\alpha_m\eq$. Hence, according to Lemma~\ref{lemma:lowerBoundEQ}, 
	 for each \pda\ accepting~$\pNotValid{\ma{T}_s,\varepsilon}$, $m$ turns are necessary
	 to recognize any input of the form~$\alpha_1z\alpha_2z\cdots\alpha_mz$, with~$z=a^Kb^K$, for each sufficiently large~$K$.
\qed
\end{proof}

In contrast to Theorem~\ref{th:nonRec}, by considering all accepting computations, we obtain a recursive bound:

\begin{theorem}
\label{th:rec}
	There exists an exponential function~$f$ such that if~$\ma{M}$ is a~$k$-turn \pda\ then~$k\leq f(s)$, where~$s$ is the size of~$\ma{M}$.
\end{theorem}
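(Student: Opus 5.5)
Since $\ma{M}$ is $k$-turn, every accepting computation has at most $k$ turns. We may take $k$ to be the least such value, since a smaller bound only helps; it then suffices to bound the maximum number of turns over all accepting computations by an exponential in $s$. The idea is a pumping argument: if some accepting computation has too many turns, a repeating pattern in its structure can be iterated to give accepting computations with arbitrarily many turns, contradicting the finite-turn hypothesis. This is essentially Ginsburg and Spanier's decidability argument made quantitative~\cite{GS66}.

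\textbf{Step 1 (normal form and pieces).} Apply Lemma~\ref{lemma:normalForm}, which preserves accepting computations and their exact numbers of turns while increasing the size only polynomially. Fix an accepting computation $\cal C$ with $t$ turns. Split $\cal C$ at the turn points into alternating phases, each non-decreasing or non-increasing in pushdown height. To each turn associate a finite descriptor: the state and topmost symbol at the top of the turn, and the state reached when the height first returns to the level below that symbol. There are at most $N=(\#Q)^2\cdot\#\Gamma$ descriptors.

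\textbf{Step 2 (nesting structure).} Think of $\cal C$ as a derivation tree, via the triple construction $[qAp]$ of the standard \pda-to-grammar translation. Here each turn corresponds to a leaf-level ``peak'', and the number of turns of a subcomputation $[qAp]$ equals the number of peaks in its subtree. Ultralinearity amounts to the grammar having no self-embedding of a nonterminal that produces two independent peak-bearing branches. Concretely: if some triple $[qAp]$ has, in a subtree rooted at it, two incomparable descendants that are again peak-producing occurrences of the same kind, or a descendant equal to $[qAp]$ whose complementary context contains a peak, then we can substitute that subtree into itself. This yields accepting computations with strictly more turns, and iterating gives unboundedly many.

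Conversely, if no such repetition occurs, then along every root-to-leaf path the triples contributing branching of peaks are pairwise distinct. So the ``peak-branching depth'' is at most the number of triples, $N$, and the number of peaks is at most $2^{N}$.

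\textbf{Step 3 (conclusion).} Since $\ma{M}$ is $k$-turn, the pumping of Step~2 is impossible. Hence every accepting computation has at most $2^{O((\#Q)^2\#\Gamma)}$ turns, which is exponential in (a polynomial of) $s$. This gives $f$ exponential in $s$, possibly after absorbing the polynomial into the exponent.

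\textbf{Main obstacle.} The delicate step is the pumping in Step~2. One must check that re-embedding a subcomputation really produces a valid accepting computation, since the triples match state, symbol and return state. One must also check that it strictly increases the turn count, meaning the inserted context contains a peak and no peaks merge. I would handle the merging issue by tracking, for each triple occurrence, whether its subcomputation begins with a push and ends with a pop. Turns are then additive over the concatenated subcomputations, so inserting a context containing a turn strictly increases the count.
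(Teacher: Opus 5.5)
Your argument works in outline, but it takes a different route from the paper.

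\textbf{The paper's route.} It never looks inside derivation trees. It adds an $\uparrow/\downarrow$ phase component to the states, and lets exactly the transitions that switch the phase from $\uparrow$ to $\downarrow$ read a letter $a$ (all other transitions become $\varepsilon$-moves). This gives a unary \pda\ for $L_a=\{a^t\mid \ma{M}\mbox{ has an accepting computation with $t$ turns}\}$. It then applies the unary \pda-to-NFA conversion of~\cite{PSW02}, which produces an NFA with exponentially many states. Since $L_a$ is finite and contains $a^k$, that NFA needs at least $k+1$ states.

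\textbf{Your route.} You prove directly, on derivation trees of the triple grammar, what is essentially the pumping bound for finite context-free languages, with turns playing the role of terminal symbols. That bound says a grammar with $N$ nonterminals and right-hand sides of length at most $d$ that generates a finite language has no word longer than $d^N$.

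\textbf{Comparison.} The paper's proof is short and reuses the Ginsburg--Spanier turn-counting construction, but it relies on an external result that is stronger than needed, since $L_a$ is finite. Yours is self-contained and elementary, at the price of some bookkeeping about where turns sit in the tree. Both give a bound of the form $2^{\mathrm{poly}(s)}$.

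\textbf{Points to fix when you write it up.}
\begin{enumerate}
\item The tree is $d$-ary, not binary. Here $d$ is the maximum length of a string written by a transition: a push of $d$ symbols gives a production with $d$ children. So the count is $d^N$, not $2^N$. Since $d\le s$, this is still $2^{\mathrm{poly}(s)}$.
\item Locate each turn at the first child of a push production whose subtree is a chain of stationary moves ending in a pop. Then distinct turns are distinct leaves of the subtree of turn-bearing nodes, and turn counts are additive.
\item Only your ancestor/descendant condition is a valid pumping condition, and ``the context contains a peak'' must mean a \emph{complete} turn inside the context. It cannot be a turn whose pop is supplied by whatever fills the hole. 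In the one-turn computation on $a^nb^n$, two nested occurrences of the same triple along the chain of pushes delimit a context whose push forms the turn together with a bare pop filling the hole. Yet pumping that context never adds turns.
\item Your other condition, two incomparable occurrences of a common triple, does not pump by itself. A machine that pushes $BB$ and then handles each $B$ by the same one-turn subcomputation has two such occurrences but only two turns in total.
\end{enumerate}

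Points 2 and 3 are met automatically in the case you actually use. If two branching nodes on a path carry the same triple, the upper one has a turn-bearing child off the path, and that child's turns lie entirely in the context. Iterating the substitution therefore yields accepting computations with unboundedly many turns, as you need.
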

\begin{proof}
	We adapt and refine the construction used to prove that it is decidable whether a \pda\ is finite turn~\cite[Thm.~5.1]{GS66}.
	Given a \pda~$\ma{M}$, we can modify it in order to keep track in its finite control of the current `phase' of the pushdown store,
	namely if it is increasing or decreasing.
	For this purpose, the set of state~$Q$ is extended with a component in~$\{\uparrow,\downarrow\}$, which is set to~$\uparrow$ in
	the initial state.
	While simulating a pop move, if the second component was~$\uparrow$, then it is switched to~$\downarrow$. Conversely,
	for a push move, if the second component was~$\downarrow$ then it is changed into~$\uparrow$. In the other cases the second component is unchanged.
	The resulting \pda\ is equivalent to the original one.
	
	Now, we further modify the machine, by replacing the input symbols in the transitions in order to obtain a \pda\ over the unary alphabet~$\{a\}$
	that counts in its input the turns. 
	Each transition corresponding to a turn, namely in which the second component changes from~$\uparrow$ to~$\downarrow$, reads the symbol~$a$
	from the input, while all the other transitions become~$\varepsilon$-transitions.
	In this way, the resulting \pda~$\ma{M}_a$ accepts the following language:
	\[
	L_a=\{a^t\mid~\ma{M}~\mbox{as an accepting computation containing~$t$ turns}\}\,.\footnote{%
		Actually, since the acceptance is by empty stack, an accepting computation with~$0$ turns, in the modified version of~$\ma{M}$,
 		consists of a sequence of configurations in which the second component of the state is~$\uparrow$, ending in a configuration, after
		a pop of bottom symbol~$Z_0$, in which the second component is~$\downarrow$.
		Hence, the string~$a^1$ in~$L_a$ can represent computations making~$0$ or~$1$ turns.}
	\]
	According to~\cite{PSW02}, from~$\ma{M}_a$ we can get a nondeterministic finite automaton~$A$ accepting~$L_a$ with a number of states bounded
	by~$f(s)$, where~$f$ is an exponential function.
	
	If~$\ma{M}$ is~$k$-turn, then~$L_a$ is finite and contains only strings of length bounded by~$k$. Hence the automaton~$A$ should contain at least~$k+1$ states.
	This allows to conclude that~$k<f(s)$.
\qed
\end{proof}

As a consequence of Theorems~\ref{th:nonRec} and~\ref{th:rec} we obtain:

\begin{corollary}
	\label{cor:tradeOff}
	The trade-offs between \ocas\ and \pdas\ accepting in a constant number of turns and finite-turn \pdas\ are not recursive.
\end{corollary}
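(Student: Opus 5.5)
The corollary follows by combining the two previous theorems; the plan is to spell out what a recursive trade-off would mean and then derive a contradiction. Suppose, for contradiction, that there is a recursive function $g$ with the following property: every \oca\ (or \pda) of size $s$ that accepts in a constant number of turns can be converted into an equivalent finite-turn \pda\ of size at most $g(s)$. Note that an \oca\ is itself a \pda, so it is enough to consider the \oca\ case. The lower bound in Theorem~\ref{th:nonRec} holds against every \pda, so it also covers the \pda\ case.

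The key steps are as follows.

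\begin{enumerate}
\item Take the languages $L_s$ from Theorem~\ref{th:nonRec}, with the \oca~$\ma{M}_s$ accepting $\pNotValid{\ma{T}_s,\varepsilon}$ in $T(s)=m$ turns. Its size is polynomial in $s$, say at most $p(s)$ for a polynomial $p$. (As the footnote stresses, here $s$ is the size of the description.)
\item By the assumption, there is an equivalent finite-turn \pda~$\ma{M}'_s$ of size at most $g(p(s))$. Being finite turn, $\ma{M}'_s$ is $k$-turn for some $k$.
\item By Theorem~\ref{th:rec}, $k\leq f(g(p(s)))$.
\item By the second item of Theorem~\ref{th:nonRec}, $\ma{M}'_s$ must use at least $T(s)$ turns on some accepted strings. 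Since every accepting computation of $\ma{M}'_s$ makes at most $k$ turns, we get $T(s)\leq k\leq f(g(p(s)))$.
\item This bound holds for infinitely many $s$, namely all sizes $s$ for which $L_s$ is defined.
\end{enumerate}

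The resulting inequality $T(s)\leq f(g(p(s)))$ bounds $T$ by a recursive function on infinitely many $s$. This contradicts the statement of Theorem~\ref{th:nonRec}, once that statement is read (as its proof justifies) as saying that $T$ is not recursively bounded on the set of sizes where $L_s$ exists.

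The only delicate point is this last reading. The proof of Theorem~\ref{th:nonRec} gives $T(s)$ equal to the length $m$ of the halting computation of $\ma{T}_s$ on $\varepsilon$. The busy-beaver argument shows this cannot be recursively bounded, even when restricted to the infinitely many sizes of halting machines. If needed, I would make this explicit by defining $T(s)$ as the maximum over halting $\ma{T}_s$ of size $s$ and choosing $L_s$ accordingly. This also respects the paper's convention that trade-offs are measured between descriptional sizes of the two representations.
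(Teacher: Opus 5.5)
Your proof is correct and follows exactly the route the paper intends: the paper states the corollary without proof as an immediate consequence of Theorems~\ref{th:nonRec} and~\ref{th:rec}, and your chain $T(s)\leq k\leq f(g(p(s)))$ is precisely that combination. You also point out that the non-recursiveness of $T$ must hold on the set of sizes where $L_s$ is defined; this is a genuine subtlety the paper leaves implicit, and you resolve it correctly (similarly, $k$ should be taken as the maximum number of turns in accepting computations of $\ma{M}'_s$, which is how Theorem~\ref{th:rec} is meant).
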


\section{A Sublinear Number of Turns}
\label{sec:sublinear}

In this section, we present a language~$\Lsq$ which is accepted in a sublinear number of turns.
This language is a `padded' version of a language presented in~\cite{Gab84}, whose strings of length~$n$ can be accepted by a \oca\ using a counter
bounded by~$O(\sqrt{n})$. 
We will prove that~$\Lsq$ is accepted by a \oca\ which uses~$O(\sqrt{n})$ turns. Furthermore, this bound cannot be significantly improved, 
even using a \pda.

The language~$\Lsq$ is over the alphabet~$\{0,a,b\}$. Roughly speaking, a string belongs to it if the sequence of its maximal factors in~$0^+$
does not coincide with the sequence of unary representations of all integers from~$1$ to some~$m\geq 1$, or all its maximal factors in~$\{a,b\}^+$
belong to the language~$\eq$:
\begin{eqnarray*}
\Lsq =\{x_1z_1x_2z_2\cdots x_mz_m&\mid&m>0, x_i\in\{0\}^+, z_i\in\{a,b\}^+, i=1,\ldots,m, \mbox{ and:}\\
			& &\mbox{\rm~~(a)~} \exists i, 1\leq i\leq m, |x_i|\neq i, \mbox{ or }\\
			& &\mbox{\rm~~(b)~} \forall i, 1\leq i\leq m,\,  z_i\in\eq\,\}\,.
\end{eqnarray*}

We now prove that~$\Lsq$ is accepted by a \oca\ in a number of turns of the order of~$\sqrt n$.
Furthermore, we prove a lower of the order of~$\sqrt[3]{n}$ which holds also in the case of \pdas.

\begin{theorem}\label{th:sq}
	The language~$\Lsq$ is accepted by a \oca\ in~$O(\sqrt n)$ turns.
	Furthermore, each \pda\ cannot accept~$\Lsq$ using~$o(\sqrt[3]{n})$ turns.
\end{theorem}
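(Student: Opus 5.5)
For the upper bound I would build a \oca~$\ma{M}$ that, on input $x=x_1z_1\cdots x_mz_m$ (the block structure is checked by the finite control), nondeterministically chooses between two strategies, and then argue about the cheaper one. In the first, $\ma{M}$ verifies condition~(b) by simulating $\meq$ on each block $z_i$, using exactly $m$ turns. In the second, $\ma{M}$ verifies condition~(a) by guessing a violation. The first violation index $i$ satisfies $|x_i|\neq i$ while $|x_j|=j$ for $j<i$, so it suffices to check $|x_i|\neq|x_{i-1}|+1$ for some $i$ (or $|x_1|\neq 1$, which the finite control handles). This comparison is done by counting $|x_{i-1}|$ up, skipping $z_{i-1}$, and counting down on $x_i$, detecting a mismatch. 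It costs one turn. The strategy is sound because if $|x_j|=j$ for all $j$, no such $i$ exists.

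The turn count then follows from a case analysis on accepted strings of length $n$. If condition~(a) holds, the second strategy accepts in at most one turn. Otherwise $|x_i|=i$ for all $i$, so $n\geq\sum_{i=1}^m i = m(m+1)/2$, which gives $m=O(\sqrt n)$, and the first strategy accepts in $m$ turns.

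For the lower bound I would use Lemma~\ref{lemma:lowerBoundEQ} with $\Delta=\{0\}$. Fix a \pda\ $\ma{M}$ accepting $\Lsq$, and let $N$ be the constant given by the lemma. For each $m$, set $y_0=\varepsilon$, $y_{i-1}=0^i$ for $i=1,\ldots,m$, and $y_m=\varepsilon$, so the strings have the shape $0\{a,b\}^*00\{a,b\}^*\cdots0^m\{a,b\}^*$. Strings with an empty $\{a,b\}$-block have the wrong form, and otherwise condition~(a) fails, so the intersection of this regular set with $\Lsq$ is contained in $0\,\eq\,00\,\eq\cdots 0^m\eq$, as the lemma requires. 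Here $|y_0\cdots y_m|=m(m+1)/2$, so $N'=N(m(m+1)/2+1)=\Theta(m^2)$. Take all $n_i=N'$. The string $w_m=0a^{N'}b^{N'}00a^{N'}b^{N'}\cdots0^ma^{N'}b^{N'}$ lies in $\Lsq$ by condition~(b), and by the lemma every accepting computation on it has at least $m$ turns.

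It remains to compute the length. We have $n=|w_m|=m(m+1)/2+2mN'=\Theta(m^3)$, so $m=\Omega(\sqrt[3]n)$ along the infinite sequence of lengths $|w_m|$. This contradicts acceptance in $o(\sqrt[3]n)$ turns.

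The main obstacle is the loss from the constant $N'$ in Lemma~\ref{lemma:lowerBoundEQ}, which grows with $|y_0\cdots y_k|$. This forces each $\{a,b\}$-block to have length $\Theta(m^2)$ and degrades the bound from $\sqrt n$ to $\sqrt[3]n$. I would not try to improve it. The remaining points to check carefully are the intersection condition of the lemma and the soundness of the local-mismatch check for condition~(a).
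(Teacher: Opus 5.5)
Your proposal follows the paper's proof essentially step for step. It uses the same two-branch \oca: one turn for the local test $|x_1|\neq 1$ or $|x_{i+1}|\neq|x_i|+1$, and $m$ turns for simulating $\meq$ on every $z_i$. It uses the same estimate $n>m(m+1)/2$ when only~(b) holds. It applies Lemma~\ref{lemma:lowerBoundEQ} in the same way, to $0^1a^{N'}b^{N'}0^2a^{N'}b^{N'}\cdots 0^ma^{N'}b^{N'}$ with $N'=N(m(m+1)/2+1)$, giving $n=\Theta(m^3)$. (Drop the stray ``$y_0=\varepsilon$'': your indexing $y_{i-1}=0^i$ gives $y_0=0$.)

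However, the step you singled out for careful checking is false as you state it, and the paper's proof asserts the same identity without justification. A string of $R=0\{a,b\}^*00\{a,b\}^*\cdots 0^m\{a,b\}^*$ with an empty inner $\{a,b\}$-block is not of the wrong form. The neighbouring $0$-blocks merge into a well-formed string with fewer blocks, and the merged block violates $|x_i|=i$, so condition~(a) holds. For example, take $m=2$ and $000ba=0\cdot\varepsilon\cdot 00\cdot ba$. It lies in $R\cap\Lsq$ (with $x_1=000$, $z_1=ba$), but not in $0\,\eq^*00\,\eq^*$. So the hypothesis of Lemma~\ref{lemma:lowerBoundEQ} is not satisfied, and the lemma cannot be invoked as a black box.

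The conclusion survives with a short repair. In the proof of the lemma, the contradicting string $x'$ can always be obtained by pumping \emph{up}, i.e., one extra copy of the vertical or horizontal loop. Such pumping only inserts letters $a,b$ into existing $\{a,b\}$-blocks, because the inserted factors are read while the finite automaton $A$ stays in a self-loop state. Hence $x'$ has all blocks nonempty and its $0$-blocks are exactly $0^1,\ldots,0^m$, so it fails~(a). It also fails~(b), because block $i$ becomes $a^{n_i}b^{n_i+d}$ or $a^{n_i+d}b^{n_i}$ with $d>0$. What you actually need is therefore only that strings of $R\cap\Lsq$ with all $\{a,b\}$-blocks nonempty lie in $0\,\eq\,00\,\eq\cdots 0^m\eq$, and this is what your argument establishes. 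Alternatively, restate the lemma with $\{a,b\}^+$ in place of $\{a,b\}^*$. This adds one state of $A$ per block and keeps $N'=\Theta(m^2)$, after which your intersection claim is literally true.
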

\begin{proof}
Given a string~$w=x_1z_1x_2z_2\cdots x_mz_m$, a \oca~$\ma{A}$ can decide if~$w\in \Lsq$
by nondeterministically selecting one between the conditions~(a) and~(b), and verifying it.
(We are supposing that~$w$ is in the appropriate format. This can be tested using the finite control while scanning~$w$.)

We observe that condition~(a) is true if and only if either~$x_1\neq 1$ or there exists~$i$,
$1\leq i<m$, such that~$|x_{i+1}|\neq|x_i|+1$. 
Hence, it can be tested by checking if~$x_1\neq 1$ or nondeterministically choosing a block~$x_i$, saving~$|x_i|$ in the counter and, after performing one turn,
using the counter to verify the length of the next block, if any.
If the verification is successful then the machine accepts.

Condition~(b) can verified by simulating on each block~$z_i$ the \oca~$\meq$ accepting~$\eq$. This uses one turn for each~$z_i$, hence~$m$ turns.

Since to accept strings satisfying~(a) one turn is enough,
to obtain the total number of turns we estimate the number~$m$ of blocks with respect to the length~$n$ of any string~$w$ 
for which \emph{only} condition~(b) is satisfied, 
namely~$w=0^1z_10^2z_2\cdots 0^mz_m$, with~$m>0$ and~$z_1,z_2,\ldots,z_m\in\eq$.
This gives~$n=|w|>\sum_{i=1}^mi ={m(m+1)}/{2}$.
Hence, the number~$m$ of turns used to accept~$w$ is~$O(\sqrt n)$.

\smallskip

To prove the lower bound, let us consider a \pda~$\ma{M}$ accepting~$\Lsq$.
	Let~$N$ be the constant of Lemma~\ref{lemma:lowerBoundEQ} for~$\ma{M}$. 
	Given an integer~$m>0$, according to the definition of~$\Lsq$,
	the intersection of~$0^1\{a,b\}^*0^2\{a,b\}^*\cdots 0^m\{a,b\}^*$ with~$L$ 
	is~$0^1\eq 0^2\eq\cdots 0^m\eq $.
	Let us take~$N'=N\cdot({m(m+1)}/{2}+1)$.
	By Lemma~\ref{lemma:lowerBoundEQ}, to be accepted by~$\ma{M}$, the string~$w=0^1a^{N'}b^{N'}0^2a^{N'}b^{N'}\cdots 0^ma^{N'}b^{N'}$
	requires~$m$ turns.
	Furthermore~$n = |w| = \frac{m(m+1)}{2}+2N'm=\Theta(m^3)$.
	Hence, for each~$m$ there is an input length~$n$ for which~$m$ turns are required, where~$m = \Theta(\sqrt[3]{n})$. 
	This allows to conclude that~$\ma{M}$ cannot accept using~$o(\sqrt[3]n)$ turns.
\qed
\end{proof}

\section{An Infinite Hierarchy}
\label{sec:hie}

In Section~\ref{sec:sublinear}, we presented a language~$\Lsq$ accepted in a sublinear number of turns, 
not growing less than~$\sqrt[3]{n}$.
We now show that there are languages for which the number of turns, necessary and sufficient, can be dramatically reduced.
In particular, we will prove that for each integer~$k$ there is a language~$\Lk{k}$ accepted in a number of turns of the order of~$\logk{k}n$
which is also necessary. This gives an infinite hierarchy with respect to the number of turns.

\medskip

To obtain such result, let us start by introducing some preliminary tools.
Let us denote by~$\bin{x}$ the binary representation of any integer~$x>0$. 
It is well known that~$|\bin{x}|=\lfloor\Log x\rfloor+1$.

Given a string~$y$ of the form~$y=x_1\$x_2\$\cdots x_m\$$, with~$x_i\in\{0,1\}^+$, $i=1,\ldots,m$, let us denote by~$\last{y}$ the rightmost binary block in~$y$,
namely the block~$x_m$.

We consider the following language over the alphabet~$\{0,1,\$\}$:
\[
	\ListBin=\{ \bin{1}\$\bin{2}\$\cdots\$\bin{m}\$ \mid m\geq 1\}\,.
\]
Strings in~$\ListBin$ are lists of binary encoded integers from~$1$ to some~$m\geq 1$, where each encoding is ended by the
delimiter~$\$$.

Using a pumping argument, it can be proved that~$\ListBin$ is not a context-free language. However, its complement is context free.
More precisely, it is accepted by a $1$-turn \oca~\cite{LSH65,Gab84}.

With abuse of notation, for~$y\in\ListBin$, by~$\last{y}$ we also denote the integer represented by the rightmost binary block in~$y$.
\begin{lemma}
\label{lemma:Lbin}
	If~$y=x_1\$x_2\$\cdots\$x_m\$\in\ListBin$, with~$x_i\in\{0,1\}^*$, $i=1,\ldots,m$, then:
	\begin{itemize}
	\item $|\last{y}|\leq\Log|y|$,
	\item $\occurr{y}{\$}=m=\last{y}\geq 2^{|\last{y}|-1}$,
	\item $|y|\leq 2m+m\Log m$.
	\end{itemize}
\end{lemma}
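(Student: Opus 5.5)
Since $y\in\ListBin$, we have $x_i=\bin{i}$ for $i=1,\ldots,m$. This single observation gives everything, and I will prove the three items in the order second, third, first, because the first item is easiest to derive from the length estimate.

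\emph{Second item.} Each block ends with exactly one $\$$, and the binary blocks contain no $\$$. Hence $\occurr{y}{\$}=m$. Moreover $\last{y}=x_m=\bin{m}$, so $\last{y}=m$ as an integer. Since $|\bin{m}|=\lfloor\Log m\rfloor+1$, we get $|\last{y}|-1\le\Log m$, which gives $m\ge 2^{|\last{y}|-1}$.

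\emph{Third item.} I would write
\[
|y|=\sum_{i=1}^m\bigl(|\bin{i}|+1\bigr)=\sum_{i=1}^m\bigl(\lfloor\Log i\rfloor+2\bigr)\le 2m+\sum_{i=1}^m\Log i\le 2m+m\Log m,
\]
using that $\Log$ is nondecreasing on the positive integers.

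\emph{First item.} I need $\lfloor\Log m\rfloor+1\le\Log|y|$. Since $|y|$ is an integer, it suffices to show $|y|\ge 2^{\lfloor\Log m\rfloor+1}$. Let $\ell=\lfloor\Log m\rfloor$, so that $m\ge 2^\ell$.
\begin{itemize}
\item The blocks $\bin{1},\ldots,\bin{m}$ together with their $m$ delimiters contribute $|y|=\sum_{i=1}^m(|\bin{i}|+1)\ge 2m\ge 2^{\ell+1}$, since each term is at least $2$.
\item Hence $\Log|y|\ge \ell+1=|\last{y}|$, as required.
\end{itemize}
No small case fails: for $m=1$ we have $y=1\$$, $|y|=2$, and $\Log 2=1=|\last{y}|$.

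None of these steps is a real obstacle. The only point needing care is the first item, which must use the crude bound $|y|\ge 2m$ rather than the upper estimate of the third item. The slight subtlety is matching $\lfloor\Log m\rfloor+1$ against $\Log(2m)=1+\Log m$, which holds because $\lfloor\Log m\rfloor\le\Log m$.
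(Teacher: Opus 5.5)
Your proof is correct and follows essentially the same route as the paper: both use $x_i=\bin{i}$, get the first item from $|y|\ge 2m$ (the paper writes it as $m\le|y|/2$, so $1+\Log m\le\Log|y|$), and handle the second and third items with the same floor and sum computations. Your remark that integrality of $|y|$ is needed is superfluous; monotonicity of $\Log$ is enough.
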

\begin{proof}
	Since~$y\in\ListBin$, $x_i=\bin{i}$ for~$i=1,\ldots,m$.
	Furthermore, because the delimiters, $m\leq|y|/2$.
	Hence:
	\[|\last{y}|=|x_m|=|\bin{m}|\leq 1+\Log m\leq 1+\Log\frac{|y|}{2}=\Log|y|\,.\]	
	The equalities in the second statement are  trivial.
	Furthermore,~$|\last{y}|=|\bin{m}|=\lfloor\Log m\rfloor+1$.
	Hence, $2^{|\last{y}|-1}=2^{\lfloor\Log m\rfloor}\leq m$.
	
	For the third statement, we observe that~$\sum_{i=1}^m\lfloor\Log i\rfloor \leq  m\Log m$, for~$m>0$.
Hence:
\[
 |y| = \sum_{i=1}^{m}|\bin{i}\$| = \sum_{i=1}^m\left(\lfloor\Log i\rfloor+2\right)\leq 2m+m\Log m\,.
\]
\qed
\end{proof}

\noindent
Before introducing the announced hierarchy, we prove that given a \pda\ accepting a language~$L\subseteq\Sigma^*$ in~$O(t(n))$ turns, 
it is always possible to build a \pda\ accepting another language using~$O(t(\log n))$ turns.
More precisely, we consider the \emph{extended version of a language~$L$}, denoted as~$\Ext{L}$, which is the language over the alphabet~$\Delta=\Sigma\cup\{0,1,\$\}$ defined as follows:\footnote{%
Notice that we do not require~$\Sigma\cap\{0,1,\$\}=\emptyset$. This allows to iteratively define extended versions of extended versions, without enlarging the input alphabet.
}%
\begin{eqnarray*}
\Ext{L} =\{x_1\$x_2\$\cdots\$x_m\$\$x&\mid&\mbox{$x_i\in\{0,1\}^+, i=1,\ldots,m$, $x\in\Delta^*$,  } \\
			& &\mbox{and at least one of the following holds:}\\
			& &\mbox{\rm(a)~} x_1\$x_2\$\cdots\$x_m\$\notin\ListBin,\\
			& &\mbox{\rm(b)~} |x_m|<|x|,\\
			& &\mbox{\rm(c)~} \,x\in L~\}.
\end{eqnarray*}

\noindent
We point out that, in the above definition, if~$x_1\$x_2\$\cdots\$x_m\$\notin\ListBin$, i.e., condition~(a) is satisfied, 
then the suffix~$x$ could be any string in~$\Delta^*$.
Otherwise, if~$|x_m|\geq|x|$ (i.e., condition~(b) is not satisfied), then~$x$ must belong to~$L$
(condition~(c)).
Strings satisfying only condition~(c) will be crucial to obtain the next results.

\begin{theorem}
\label{th:ext}
	Given a language~$L$ accepted by a \pda~$\ma{M}$ in~$t(n)$ turns,  there exists a \pda~$\ma{M}'$ accepting~$\Ext{L}$  using~$t'(n)$ turns,
	with~$t'(n)\leq\max(1,t(\Log(n)))$.
	Furthermore, if~$\ma{M}$ is a \oca\ then also~$\ma{M}'$ is a \oca.
\end{theorem}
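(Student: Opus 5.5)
The machine $\ma{M}'$ will, on an input of the form $x_1\$x_2\$\cdots\$x_m\$\$x$ (the format is checked in the finite control, rejecting otherwise), nondeterministically choose one of the three conditions (a), (b), (c) of the definition of $\Ext{L}$ and verify it.

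\emph{Condition (a).} Here $\ma{M}'$ simulates the $1$-turn \oca\ accepting the complement of $\ListBin$ on the prefix up to the double delimiter, and then scans $x$ with the finite control only.

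\emph{Condition (b).} Here $\ma{M}'$ ignores $x_1,\ldots,x_{m-1}$ and guesses that the current block is the last one. It pushes one counter symbol per symbol of $x_m$, checks that $\$\$$ follows, and then pops one symbol per symbol of $x$. It accepts when the counter reaches zero strictly before the end of $x$; the remaining symbols of $x$ are read while the bottom symbol is still on the store, and the bottom is popped at the end. This uses one turn and only a counter.

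\emph{Condition (c).} Here $\ma{M}'$ skips the prefix with the finite control and then simulates $\ma{M}$ on $x$, using the same number of turns as $\ma{M}$ does on $x$. Some care is needed with the bottom symbol, namely that the prefix phase uses no push, so no extra turn arises. If $\ma{M}$ is a \oca, every branch uses only a counter, so $\ma{M}'$ is a \oca. Correctness, i.e.\ $L(\ma{M}')=\Ext{L}$, is immediate from the three branches.

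\emph{Turn bound.} Let $w\in\Ext{L}$ with $|w|=n$. If $w$ satisfies (a) or (b), it is accepted in at most one turn. Otherwise $y=x_1\$\cdots\$x_m\$\in\ListBin$ and $|x|\le|x_m|=|\last{y}|$. By Lemma~\ref{lemma:Lbin}, $|\last{y}|\le\Log|y|\le\Log n$, so $|x|\le\Log n$. Since $x\in L$, $\ma{M}$ accepts $x$ in $t(|x|)$ turns, hence $\ma{M}'$ accepts $w$ in at most $t(|x|)$ turns.

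The main obstacle is getting from $t(|x|)$ to $t(\Log n)$, because $t$ need not be monotone. I would handle this by replacing $t$ with its monotone envelope $\hat t(\ell)=\max_{j\le\ell}t(j)$, i.e.\ reading ``accepts in $t(n)$ turns'' as applying to all strings of length at most $n$, or by assuming $t$ nondecreasing as is implicit in the paper. With that, $t(|x|)\le t(\Log n)$, and the bound $t'(n)\le\max(1,t(\Log n))$ follows. A minor point is that $\Log n$ may be non-integral; I would take $\lfloor\Log n\rfloor$, which is still at least $|x|$.
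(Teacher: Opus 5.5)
Your proposal is correct and follows essentially the same route as the paper. Both arguments nondeterministically verify one of (a), (b), (c), with (a) and (b) costing at most one counter turn and (c) a direct simulation of $\ma{M}$ on $x$, and then use Lemma~\ref{lemma:Lbin} to get $|x|\le|x_m|\le\Log n$ when only (c) holds. Your remark that passing from $t(|x|)$ to $t(\Log n)$ requires $t$ to be nondecreasing, and requires rounding $\Log n$ to an integer, is a legitimate refinement: the paper's proof assumes both silently.
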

\begin{proof}
From the \pda~$\ma{M}$ accepting~$L$, we define a \pda~$\ma{M}'$ for~$\Ext{L}$ that
given an input~$w$ of the form $x_1\$x_2\$\cdots\$x_m\$\$x$, 
nondeterministically selects one among conditions~(a), (b), (c) in the definition of~$\Ext{L}$, and checks it.

Conditions~(a) and~(b) can be tested using the pushdown store as a counter and making at most one turn.
Hence, when one of these conditions is satisfied by~$w$, one turn is enough.
Condition~(c) is checked by simulating the given \pda~$\ma{M}$ on the suffix~$x$, using at most~$t(|x|)$ turns.
When only~(c) is satisfied by~$w$ then, by Lemma~\ref{lemma:Lbin},~$|x|\leq |x_m|\leq\Log|x_1\$x_2\$\cdots\$x_m\$|\leq\Log|w|$.
So, $t(\Log|w|)$ turns are enough in this case.

Since to check conditions~(a) and~(b) the pushdown store is used as a counter, we easily conclude that if~$\ma{M}$
is a \oca\ then also~$\ma{M}'$ is a \oca.
\qed
\end{proof}

Let us now consider~$(\eq\$)^+$. It is easy to observe that this language is accepted by a \oca\ which uses a linear number of turns.  
Hence, according to Theorem~\ref{th:ext}, 
the language~$\ExtK{k}{(\eq\$)^+}$ is accepted by a \oca\ in~$\max(1,\logk{k}n)$ turns, for each~$k\geq 0$.

We now slightly modify languages~$\ExtK{k}{(\eq\$)^+}$ in order to obtain the announced hierarchy.
More precisely, for each~$k\geq 0$, we define the following language over~$\Sigma=\{a,b,0,1,\$\}$:\footnote{%
	In order to make the technical presentation easier, the languages~$\Lk{k}$ presented here are different from those in the preliminary version of the paper~\cite{Pig25}.
	However, the arguments and the techniques are similar.}
\begin{eqnarray*}
\Lk{k}=
	\{y_k\$\cdots\$y_1\$y_0&\mid& {y_i\in\{\{0,1\}^+\$\}^+, i=1,\ldots,m, y_0\in\{a,b,\$\}^*,} \\
			&&\mbox{and at least one of the following holds:}\\
			&&\mbox{~~\rm(a)~} \exists i, 1\leq i\leq k, y_i\notin\ListBin,\\ 
			&&\mbox{~~\rm(b)~} \exists i, 1\leq i\leq k, |\last{y_i}|\leq\occurr{y_{i-1}}{\$},\\ 
			&&\mbox{~~\rm(c)~} y_0\in(\eq\$)^+\}\,.
\end{eqnarray*}

\noindent
The most relevant difference between~$\Lk{k}$ and~$\ExtK{k}{(\eq\$)^+}$ is condition~(b).
With the new version of this condition, we are still able to obtain a~$\logk{k}n$ upper bound for the number of turns that are sufficient to accepts strings in~$\Lk{k}$, but we will also be able to prove that a number of this order is necessary.

We observe that to check condition~(b), the symbols~$a$ and~$b$ in the block~$y_0$ are not relevant. The remaining symbols in~$y_0$ are~$\$$s. The number of their occurrences in each input string~$w$ satisfying \emph{only} condition~(c) coincides with the number of turns used to accept~$w$.

\begin{example}
\label{example:ex1}
Let us consider the following strings:\\
{
\phantom{~~~}$w_1=\, 1\$10\$11\$100\$\$1\$10\$\$ab\$ab\$aabb\$\$$\\
\phantom{~~~}$w_2=\, 1\$10\$11\$100\$101\$110\$111\$1000\$\$1\$10\$11\$\$ab\$ba\$$\\
\phantom{~~~}$w_3=\, 1\$10\$11\$100\$101\$110\$111\$1000\$\$1\$10\$11\$\$aabb\$$\\
}
All of them are in the language~$\Lk{2}$ and do not satisfy condition~(a) in the definition of the language.
In particular:
\begin{itemize}
\item The string~$w_1$ satisfies condition~(b) because~$|\last{1\$10\$}|=2<\occurr{ab\$ab\$abba\$}{\$}=3$. Furthermore, being~$ab\$ab\$aabb\$\in(\eq\$)^+$, $w_1$ satisfies also condition~(c).
\item The string~$w_2$ satisfies condition~(b) because~$|\last{1\$10\$11\$}| = 2 = \occurr{ab\$ba\$}{\$}$, but it does not satisfy condition~(c).
\item The string~$w_3$ satisfies only condition~(c). In fact, $aabb\$\in(\eq\$)^+$, while~$|\last{1\$10\$11\$100\$101\$110\$111\$1000\$}| = 4 > \occurr{1\$10\$11\$}{\$}=3$
and~$|\last{1\$10\$11\$}| = 2 > \occurr{aabb\$}{\$}=1$,
\qed
\end{itemize}
\end{example}
By refining the argument used to prove Theorem~\ref{th:ext}, we now obtain an upper bound for the number of turns that are sufficient to accept~$\Lk{k}$:

\begin{theorem}
\label{th:loglog-up}
	For each~$k\geq 0$ there exists a \oca~$\ma{A}_k$ accepting the language~$\Lk{k}$ in~$max(1,\logk{k}n)$ turns.
\end{theorem}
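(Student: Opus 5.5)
The automaton~$\ma{A}_k$ will be built in the same style as in Theorem~\ref{th:ext}. We assume the input has the required block format $y_k\$\cdots\$y_1\$y_0$, which the finite control can check. On such an input, $\ma{A}_k$ nondeterministically picks one of the three conditions in the definition of~$\Lk{k}$ and verifies it.

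For condition~(a), it guesses an index~$i$ and runs on~$y_i$ the $1$-turn \oca\ for the complement of~$\ListBin$~\cite{LSH65,Gab84}. For condition~(b), it guesses~$i$ and, while scanning~$y_i$, guesses where the last block begins. It pushes~$|\last{y_i}|$ onto the counter, skips the separator, and pops one unit for each~$\$$ read in~$y_{i-1}$, ignoring the other symbols of~$y_{i-1}$ (including $a,b$ when $i=1$). It accepts if the counter reaches zero no later than the end of~$y_{i-1}$. Both checks use at most one turn. For condition~(c), it simulates~$\meq$ on each factor of~$y_0$ between consecutive~$\$$s, using exactly~$\occurr{y_0}{\$}$ turns. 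Everything uses only a counter, so~$\ma{A}_k$ is a \oca.

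It remains to bound the turns for strings~$w$ with~$|w|=n$ that satisfy only condition~(c); every other string is accepted in at most one turn. For such~$w$ every~$y_i$ lies in~$\ListBin$, and~$\occurr{y_{i-1}}{\$}<|\last{y_i}|$ for~$i=1,\ldots,k$. We prove by induction on~$j$, for~$j=0,\ldots,k$, that~$\occurr{y_{k-j}}{\$}\leq\logk{j}n$ (and for~$j\geq1$ also~$|\last{y_{k-j+1}}|\leq\logk{j}n$). The base case~$j=0$ is trivial. For the inductive step, Lemma~\ref{lemma:Lbin} gives~$|\last{y_i}|\leq\Log|y_i|$ and~$|y_i|\leq 2m+m\Log m$ with~$m=\occurr{y_i}{\$}$. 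So the number of~$\$$ in~$y_{i-1}$ is below~$\Log(2m+m\Log m)$, with~$m\leq\logk{j}n$.

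The main obstacle is that~$\Log(2m+m\Log m)$ is slightly larger than~$\Log m$, so a naive induction loses an additive term at each level. We would handle this with integrality. Lemma~\ref{lemma:Lbin} gives~$|\last{y_i}|=\lfloor\Log m\rfloor+1$ exactly, so~$\occurr{y_{i-1}}{\$}\leq\lfloor\Log m\rfloor\leq\Log m$, with~$m=\occurr{y_i}{\$}$. This chains cleanly through~$y_{k-1},\ldots,y_0$ using monotonicity of~$\Log$.

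The top level needs care because there we only know~$m=\occurr{y_k}{\$}\leq n/2$. Then~$\occurr{y_{k-1}}{\$}\leq\Log(n/2)\leq\Log n$, and iterating gives~$\occurr{y_0}{\$}\leq\logk{k}n$. If some quantity drops to~$0$ or~$1$, the truncated~$\Log$ keeps all inequalities valid, and since~$y_0$ contains at least one~$\$$, the~$\max(1,\cdot)$ in the statement covers these degenerate cases. So for this case the turn count is~$\occurr{y_0}{\$}\leq\max(1,\logk{k}n)$, and the theorem follows.
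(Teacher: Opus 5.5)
Your proof is correct and is essentially the paper's argument. On strings satisfying only condition~(c), membership in $\ListBin$ together with the negation of~(b) gives $\occurr{y_{i-1}}{\$}\leq\lfloor\Log\occurr{y_i}{\$}\rfloor$; the paper writes this as $\last{y_i}\geq 2^{\last{y_{i-1}}}$. Chaining these inequalities from $\occurr{y_k}{\$}\leq n$ then yields $\occurr{y_0}{\$}\leq\logk{k}n$.

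One thing to fix: the parenthetical induction claim $|\last{y_{k-j+1}}|\leq\logk{j}n$ is false in general for $j\geq 2$. For example, $k=2$, $\last{y_2}=2^{16}$, $\last{y_1}=16$, $y_0=(ab\$)^4$ gives $|\last{y_1}|=5>\logk{2}n\approx 4.32$. Your final chain never uses it, so simply drop it.
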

\begin{proof}
Fixed~$k\geq 0$, let~$w=y_k\$\cdots\$y_1\$y_0$ be a string in~$\Lk{k}$, according to the above definition.
Each one of the conditions~(a) and~(b) can be verified using only one turn.
So a greater number of turns could be necessary for strings satisfying \emph{only} condition~(c).

In particular, if~$w$ satisfies only~(c), then it is accepted in~$t$ turns, where~$t=\occurr{y_0}{\$}$.
We now relate~$t$ to the length~$n$ of~$w$.
From the negation of~(a) and Lemma~\ref{lemma:Lbin}, we obtain that, for $i=1,\ldots,k$:
\[\last{y_i}\geq 2^{|\last{y_i}|-1}\,.\] 
Furthermore, when~$i>1$, from the negation of~(b) we obtain that~$|\last{y_i}| > \occurr{y_{i-1}}{\$}$, i.e., $|\last{y_i}|-1 \geq \occurr{y_{i-1}}{\$}$, so obtaining
\[
\last{y_i}\geq 2^{|\last{y_i}|-1}\geq 2^{\occurr{y_{i-1}}{\$}}=2^{\last{y_{i-1}}}\,
\]
and, hence~$\Log\last{y_i}\geq\last{y_{i-1}}$, for~$i=2,\ldots,n$.
For the same reason, $\last{y_1}\geq 2^{\occurr{y_0}{\$}}=2^t$ and $\lg\last{y_1}\geq t$.
This allows to obtain that~$t\leq\logk{i}\last{y_i}$, for~$i=1,\ldots,k$. Since~$n=|w|\geq\last{y_k}$, we finally obtain that~$t\leq\logk{k}n$.
This completes the proof.
%
\qed
\end{proof}

We are now going to obtain a lower bound for the number of turns necessary to accept~$\Lk{k}$.
To this aim, we now introduce some strings that, as we will prove, need a number of
turns of the order of $\logk{k}n$ to be accepted.

First, for integers~$t,k\geq 0$, let us consider the following numbers:
\[
n_{t,k}=\left\{\begin{array}{ll}
	t & \mbox{if } k=0,\\
	2^{n_{t,k-1}} & \mbox{otherwise}.
\end{array}\right.
\]
For~$k>0$, we define the strings:
\[
y_{t,k}=\bin{1}\$\bin{2}\$\cdots\$\bin{n_{t,k}}\$
\]
and
\[
w_{t,k}=y_{t,k}\$y_{t,k-1}\$\cdots\$y_{t,1}\$
\]
\begin{lemma}
\label{lemma:wtkLenght}
If~$t>2$ then $|w_{t,k}| < n_{t,k}^2$.
\end{lemma}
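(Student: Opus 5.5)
Since $w_{t,k}=y_{t,k}\$y_{t,k-1}\$\cdots\$y_{t,1}\$$, we have
\[|w_{t,k}|=\sum_{i=1}^{k}\bigl(|y_{t,i}|+1\bigr).\]
Each $y_{t,i}$ belongs to $\ListBin$ with $\last{y_{t,i}}=n_{t,i}$, so the third statement of Lemma~\ref{lemma:Lbin} gives
\[|y_{t,i}|\leq 2n_{t,i}+n_{t,i}\Log n_{t,i}=n_{t,i}\bigl(2+n_{t,i-1}\bigr),\]
using $\Log n_{t,i}=n_{t,i-1}$ for $i\geq 1$. The plan is to bound the top term against $n_{t,k}^2$ with room to spare, and then show that the remaining lower-order terms fit into that room.

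For the top term, write $M=n_{t,k-1}\geq t\geq 3$, so that $n_{t,k}=2^M$. We need $2^M(2+M)+1$ plus the lower terms to be less than $2^{2M}$. Since $2^M\geq 8>2+M$ once $M\geq 3$, the top term is already well below $n_{t,k}^2$. More precisely, we have $2^M-(M+2)\geq 3$ for $M=3$, and this gap grows quickly with $M$. So the slack is
\[2^{M}\bigl(2^M-M-2\bigr)\geq 3\cdot 2^M.\]

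It remains to show that $\sum_{i<k}(|y_{t,i}|+1)+1\leq 3\cdot 2^M$. For this I would prove by induction on $k$ the slightly stronger claim $|w_{t,k}|\leq n_{t,k}(n_{t,k-1}+3)$, or an equivalent clean bound. The inductive step needs
\[n_{t,k-1}(n_{t,k-2}+3)\leq 2^{M},\]
that is, $M(\Log M+3)\leq 2^M$. For $k\geq 2$ we have $M=n_{t,k-1}\geq 2^3=8$, so $M(\Log M+3)\leq 8\cdot 6=48$ at $M=8$, while $2^8=256$. For larger $M$ the right-hand side grows much faster, so the inequality holds. With the stronger claim in hand, $n_{t,k}(M+3)<2^{2M}$ holds for $M\geq 3$ (check: $8\cdot 6=48<64$).

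The base case is $k=1$. Here $|w_{t,1}|=|y_{t,1}|+1\leq 2^t(2+t)+1$, and this must fit under $2^t(t+3)$. That holds because $2^t\geq 1$.

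The main obstacle is getting the small cases to work with the constants exactly right: the threshold $t>2$ and the smallest instance $k=1$, $t=3$. There the actual count should be checked directly: $|y_{t,1}|$ for $n_{t,1}=8$ is $1+2+2+3+3+3+3+4+8=29$, so $|w_{t,1}|=30<64$. If the Lemma~\ref{lemma:Lbin} bound turns out too loose somewhere, I would instead use the exact $\sum(\lfloor\Log i\rfloor+2)$.
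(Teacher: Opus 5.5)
Your proof is correct, and its skeleton is the same as the paper's. Both argue by induction on $k$, split $|w_{t,k}|=|y_{t,k}|+1+|w_{t,k-1}|$, and bound $|y_{t,k}|\le n_{t,k}(2+n_{t,k-1})$ via Lemma~\ref{lemma:Lbin}.

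What differs is the inductive invariant.

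\textbf{Paper's route.} It inducts directly on $|w_{t,k}|<n_{t,k}^2$. The step then needs only the single inequality $2^x(2+x)+x^2+1<2^{2x}$ for $x=n_{t,k-1}\geq 3$. The crude hypothesis $|w_{t,k-1}|<n_{t,k-1}^2$ contributes only the term $x^2$, which is negligible against $2^{2x}$.

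\textbf{Your route.} You carry the sharper bound $|w_{t,k}|\le n_{t,k}(n_{t,k-1}+3)$. This needs $1+M(\Log M+3)\le 2^M$ for $M\ge 8$ in the step, and then $M+3<2^M$ for $M\ge 3$ to reach the square bound. It costs one extra numeric fact. In exchange it gives the more informative estimate $|w_{t,k}|=O(n_{t,k}\Log n_{t,k})$, which shows the lemma's square bound has plenty of room.

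Two small slips, neither affecting validity:
\begin{itemize}
\item The step inequality you wrote, $n_{t,k-1}(n_{t,k-2}+3)\le 2^M$, omits the $+1$ contributed by the delimiter after $y_{t,k}$. The slack ($48$ versus $256$ at $M=8$) easily absorbs it.
\item The chain ``$2^M\geq 8>2+M$'' is false as written for large $M$; what you need, and what holds for $M\ge 3$, is $2^M>M+2$.
\end{itemize}
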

\begin{proof}
To prove the result, we make use of the following inequality, which holds for each~$x>2$ and can be easily verified:
\begin{equation}
\label{eq:22x}
2^x(2+x)+x^2+1 <  2^{2x}\,.
\end{equation}
We observe that, by Lemma~\ref{lemma:Lbin}, $|y_{t,k}|\leq 2n_{t,k}+n_{t,k}\Log n_{t,k}$, for~$k>0$.

We prove the claimed upper bound on~$|w_{t,k}|$ by induction on~$k$.

If~$k=1$, using~(\ref{eq:22x}) with~$x=t$ and~$n_{t,1}=2^t$, we obtain:
\[|w_{t,1}|=|y_{t,1}|+1\leq 2^{t+1}+2^tt+1=2^t(2+t)+1<2^{2t} = n_{t,1}^2\,.\]
For~$k>1$, using~$|w_{t,k-1}|<n^2_{t,k-1}$ as induction hypotesis and~(\ref{eq:22x}), we obtain:
\begin{eqnarray*}
|w_{t,k}|&=&|y_{t,k}|+1+|w_{t,k-1}| \\
		&<& 2n_{t,k}+n_{t,k}\Log n_{t,k}+1+ n_{t,k-1}^2\\
		& = & 2\cdot 2^{n_{t,k-1}}+2^{n_{t,k-1}}\cdot n_{t,k-1} + 1 + n_{t,k-1}^2\\		
		& = & 2^{n_{t,k-1}}(2+ n_{t,k-1}) + n_{t,k-1}^2 + 1\\
		& < & 2^{2n_{t,k-1}} = (2^{n_{t,k-1}})^2 = n_{t,k}^2\,.
\end{eqnarray*}
\qed
\end{proof}

\begin{lemma}
\label{lemma:wtk}
	For each integer~$h>0$ and each string~$x\in(\eq\$)^h$, the string~$w=w_{t,k}x$ belongs to the language~$\Lk{k}$.
	Furthermore, $w$ satisfies \emph{only condition~(c)} in the definition of~$\Lk{k}$ if and only if~$h\leq t$.
\end{lemma}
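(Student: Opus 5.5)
The plan is to parse $w=w_{t,k}x$ against the format of $\Lk{k}$ and then check the three conditions one at a time. By construction $w=y_{t,k}\$y_{t,k-1}\$\cdots\$y_{t,1}\$x$. So we take $y_i=y_{t,i}$ for $i=1,\ldots,k$ and $y_0=x$. Each $y_{t,i}$ lies in $\{\{0,1\}^+\$\}^+$, and $x\in\{a,b,\$\}^*$, so $w$ has the required shape.

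Since $x\in(\eq\$)^h$ with $h>0$, condition~(c) holds, and therefore $w\in\Lk{k}$. Condition~(a) fails, because each $y_{t,i}=\bin{1}\$\cdots\$\bin{n_{t,i}}\$$ belongs to $\ListBin$. Hence $w$ satisfies only condition~(c) if and only if condition~(b) fails, that is, if and only if $|\last{y_{t,i}}|>\occurr{y_{i-1}}{\$}$ for every $i=1,\ldots,k$.

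We compute both sides of this inequality. By Lemma~\ref{lemma:Lbin}, $\last{y_{t,i}}=n_{t,i}$, so $|\last{y_{t,i}}|=\lfloor\Log n_{t,i}\rfloor+1$. For $i\geq 1$ we have $n_{t,i}=2^{n_{t,i-1}}$, which is an exact power of two, so $|\last{y_{t,i}}|=n_{t,i-1}+1$. On the other side, for $i\geq 2$ we get $\occurr{y_{t,i-1}}{\$}=n_{t,i-1}$, again by Lemma~\ref{lemma:Lbin}. For $i=1$, each factor in $\eq\$$ contains exactly one $\$$, so $\occurr{x}{\$}=h$.

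For $i\geq 2$ the required inequality reads $n_{t,i-1}+1>n_{t,i-1}$, which always holds. For $i=1$ it reads $n_{t,0}+1=t+1>h$, which is equivalent to $h\leq t$. This gives the equivalence claimed in the statement.

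The main point to be careful about is the exactness of $|\bin{2^{s}}|=s+1$, since everything rests on $n_{t,i}$ being a power of two. A smaller point is the degenerate case $t=0$, where $n_{t,1}=1$ and $y_{t,1}=1\$$; here $|\last{y_{t,1}}|=1$, so condition~(b) holds for every $h>0$, which agrees with $h\leq t$ failing.
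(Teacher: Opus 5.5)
Your proof is correct and follows essentially the same route as the paper. You rule out (a) because each $y_{t,i}\in\ListBin$, and you show that (b) can never hold at $i>1$, since $|\last{y_{t,i}}|=n_{t,i-1}+1>n_{t,i-1}=\occurr{y_{t,i-1}}{\$}$; at $i=1$ it reduces to $t+1\leq h$. Your explicit observation that (c) holds, which gives membership in $\Lk{k}$, and your check of the case $t=0$ are small additions that the paper leaves implicit.
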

\begin{proof}
	By definition, the string~$w$ cannot satisfy condition~(a).
	
	Concerning condition~(b), we observe that, by definition of~$w_{t,k}$, if~$i>1$ then~$|\last{y_i}|=\lfloor\Log n_{t,i}\rfloor + 1 = n_{t,i-1} + 1$
	and~$\occurr{y_{i-1}}{\$}=n_{t,i-1}$, that gives~$|\last{y_i}| > \occurr{y_{i-1}}{\$}$.
	So, condition~(b) is satisfied if and only if~$|\last{y_1}|\leq\occurr{y_0}{\$}$.
	Since~$\last{y_1}=n_{t,1}=2^t$, $|\last{y_1}|=t+1$, while~$\occurr{y_0}{\$}=h$, we obtain that condition~(b) is satisfied if and only if~$t+1\leq h$, i.e., $h>t$.
		
	This allows us to conclude that~$w$ satisfies \emph{only condition~(c)} if and only if~$h\leq t$.
\qed
\end{proof}

We are now able to prove the lower bound:

\begin{theorem}
\label{th:loglog-lb}
	For each \pda~$\ma{M}_k$ accepting the language~$\Lk{k}$, $k>0$, and
	for infinitely many integers~$n$, there is a string of length~$n$ requiring at 
	least~$\logk{k}n-2$ turns to be accepted by~$\ma{M}_k$. 
\end{theorem}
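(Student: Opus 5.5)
Fix $k>0$ and a \pda~$\ma{M}_k$ accepting $\Lk{k}$, and let $N$ be the constant given by Lemma~\ref{lemma:lowerBoundEQ} for $\ma{M}_k$, with $\Delta=\{0,1,\$\}$. For each integer $t>2$ I will build one hard string $w_t$ and show that it requires $t$ turns and that $\logk{k}|w_t|\leq t+2$. Since the lengths $|w_t|$ are unbounded, this gives infinitely many lengths $n$.

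The hard string is $w_t=w_{t,k}(a^{K}b^{K}\$)^t$, where $K=N\cdot(|w_{t,k}|+t+1)$. By Lemma~\ref{lemma:wtk}, with $h=t$, every string of the form $w_{t,k}x$ with $x\in(\eq\$)^t$ lies in $\Lk{k}$ and satisfies only condition~(c).

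To apply Lemma~\ref{lemma:lowerBoundEQ}, take $y_0=w_{t,k}$, $y_i=\$$ for $1\leq i<t$, and $y_t=\$$. Set $R=y_0\{a,b\}^*y_1\{a,b\}^*\cdots y_{t-1}\{a,b\}^*y_t$. A string in $R\cap\Lk{k}$ has prefix $w_{t,k}$, so condition~(a) fails. Its $y_0$-part contains exactly $t$ dollars, so by the computation in the proof of Lemma~\ref{lemma:wtk} condition~(b) fails as well. The only delicate point is that the parsing into $y_k\$\cdots\$y_1\$y_0$ is forced by the $\$\$$ separators and the fact that $y_0$ contains no $0,1$; this is the one place I expect some care to be needed. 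Hence condition~(c) must hold, the block $y_0$ lies in $(\eq\$)^t$, and so $R\cap\Lk{k}\subseteq w_{t,k}\eq\$\eq\$\cdots\eq\$$. Since $|y_0\cdots y_t|+1=|w_{t,k}|+t+1$, the choice of $K$ meets the threshold $N'$ of Lemma~\ref{lemma:lowerBoundEQ}. The lemma then shows that every accepting computation on $w_t$ makes at least $t$ turns.

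It remains to bound the length $n=|w_t|=|w_{t,k}|+t(2K+1)$. Lemma~\ref{lemma:wtkLenght} gives $|w_{t,k}|<n_{t,k}^2$. Since $K=O(N\,n_{t,k}^2)$ and $t$ is tiny compared with $n_{t,k}$, we get $n\leq n_{t,k}^{2}\cdot c\,t N$ for a constant $c$, and for large $t$ this is at most $n_{t,k}^3$, or even $n_{t,k}^4$, say. Applying $\Log$ once gives $\Log n\leq 4\,n_{t,k-1}$. If $k=1$, this gives $\Log n\leq 4t$. That is not of the form $t+2$, so for $k=1$ the padding must be kept smaller, and I will recheck the constants there; a bound of the form $\Log n\le t+O(\log t)$ already follows from the exact $|w_{t,1}|<2^{2t}$, but getting $-2$ may require a sharper count. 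For $k\geq2$, apply Lemma~\ref{lemma:logk-prod} with $\alpha=n_{t,k}$ and $\beta$ the remaining polynomial factor, written as a product of factors each smaller than $\alpha$. Each multiplication by such a factor costs at most an additive $1$ after taking $\logk{k}$, and $\logk{k} n_{t,k}=t$. This gives $\logk{k}n<t+2$, so the string of length $n$ needs at least $t>\logk{k}n-2$ turns.

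The main obstacle is this last length estimate. It requires controlling the factor $N$ and the square from Lemma~\ref{lemma:wtkLenght} so that iterating the logarithm loses at most $2$. The $k=1$ case, where Lemma~\ref{lemma:logk-prod} gives only a factor of $2$ rather than an additive $1$, needs separate attention.
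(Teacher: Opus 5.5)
Your proposal follows the paper's proof essentially step for step. You use the same hard string $w_{t,k}(a^Kb^K\$)^t$ with $K=N(|w_{t,k}|+t+1)$. You make the same appeal to Lemma~\ref{lemma:lowerBoundEQ} via $R\cap\Lk{k}\subseteq w_{t,k}\eq\$\eq\$\cdots\eq\$$; the parsing issue you mention is harmless, since each $y_i$ with $i\geq1$ is nonempty, begins with a bit and contains no $\$\$$, so the $k$ separators are forced. And you use the same length estimate through Lemma~\ref{lemma:logk-prod}. For $k\geq2$ this is correct. The cleanest version is the paper's chain $\logk{k}n<1+\logk{k}|w_{t,k}|\leq 1+\logk{k}(n_{t,k}^2)\leq 2+t$. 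Its first step is Lemma~\ref{lemma:logk-prod} with $\alpha=|w_{t,k}|$ and $\beta=n/|w_{t,k}|<\alpha$ for large $t$; its last step uses $\logk{k}(\alpha^2)\leq 1+\logk{k}\alpha$, which is valid for $k\geq2$. If you instead pass through $n\leq n_{t,k}^4$ and charge one unit per extra factor $n_{t,k}$, you only get $t+3$.

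The gap you flag at $k=1$ cannot be closed by a sharper count, because the statement is false for $k=1$. Take $\ma{M}_1$ to be the \oca~$\ma{A}_1$ of Theorem~\ref{th:loglog-up}.
\begin{itemize}
\item It accepts strings satisfying (a) or (b) in one turn, and $1<\Log n-2$ for $n>8$.
\item It accepts a string satisfying only (c) in $h=\occurr{y_0}{\$}$ turns. Here $y_1=\bin{1}\$\cdots\$\bin{m}\$$ with $\lfloor\Log m\rfloor+1>h$, so $m\geq 2^h$ and $n>|y_1|\geq h2^h$. Hence $\Log n>h+\Log h$, and so $h<\Log n-2$ whenever $h\geq4$ or $n>32$.
\end{itemize}
Thus for every $n>32$, every string of length $n$ in $\Lk{1}$ is accepted by $\ma{A}_1$ in fewer than $\Log n-2$ turns.

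The paper's proof has the same hole. Its final chain uses the additive form of Lemma~\ref{lemma:logk-prod}, which holds only for $k\geq2$, and $\Log(n_{t,1}^2)=2t$, not $1+t$. In fact $|w_{t,1}|=t2^t+t+3$, so its hard strings satisfy $n>2Nt^22^t$ and hence $t<\Log n-3$.

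A smaller slip: $|w_{t,1}|<2^{2t}$ yields only $\Log n\leq 2t+O(\Log t)$; your claimed $t+O(\Log t)$ needs the exact length.

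What does hold for $k=1$ is $t\geq\Log n-2\Log\Log n-O(1)$ on these strings, or crudely $t>\frac14\Log n$ from the multiplicative part of Lemma~\ref{lemma:logk-prod} together with $|w_{t,1}|<2^{2t}$. Either suffices for Theorem~\ref{th:hierarchy}.
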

\begin{proof}
Fixed an integer~$k>0$, let us consider a \pda~$\ma{M}_k$ accepting~$\Lk{k}$. We want to show that~$\ma{M}_k$
needs a number of turns growing as~$\logk{k}n$ to accept inputs of length~$n$.
Let~$N$ be the constant of Lemma~\ref{lemma:lowerBoundEQ} for~$\ma{M}_k$.

Given~$t>2$, let us take~$N'=N\cdot(|w_{t,k}|+t+1)$, and consider the strings~$x=(a^{N'}b^{N'}\$)^t$ and~$w=w_{t,k}x$.
According to Lemma~\ref{lemma:wtk}, $w\in\Lk{k}$ and it satisfies \emph{only condition~(c)} in the definition of~$\Lk{k}$.
We now prove that~$t$ turns are necessary to accept~$w$.
To this aim, we observe that~$w=w_{t,k}z_1\$z_2\cdots z_t\$$ with~$z_i=a^{N'}b^{N'}$, $i=1,\ldots,t$.
Since~$w$ satisfies only condition~(c), if we replace in it some of the blocks~$z_i$ with some string in~$\{a,b\}^*\setminus\eq$, then we would obtain
a string not in~$\Lk{k}$, while if the replacements are made only using strings in~$\eq$, then the resulting string belongs to~$\Lk{k}$.
In other words, $w_{t,k}\{a,b\}^*\$\{a,b\}^*\cdots \{a,b\}^*\$\cap \Lk{k}=w_{t,k}\eq\$\eq\cdots\eq\$$,
Hence, from Lemma~\ref{lemma:lowerBoundEQ}, it follows that~$t$ turns are necessary to accept~$w$.

To complete the proof, we state a relationship between the length~$n$ of~$w$ and the number~$t$ of turns required to accept it:
\begin{eqnarray*}
	n &=& |w| = |w_{t,k}| + |x| = |w_{t,k}| + (2N'+ 1) t \\
	  &=& |w_{t,k}| + (2N(|w_{t,k}|+t+1)+1)t \\
	  &=& |w_{t,k}| \cdot \left( 1 + \left(2N + \frac{2N(t+1)+1}{|w_{t,k}|} \right) \cdot t\right)
\end{eqnarray*}
For~$k$ sufficiently large, $|w_{t,k}| > 1+\left(2N + \frac{2N(t+1)+1}{|w_{t,k}|} \right) \cdot t$.
Hence, from Lemma~\ref{lemma:logk-prod}, we obtain that:
\[\logk{k}n < 1+ \logk{k}(|w_{t,k}|) \leq 1 + \logk{k} n_{t,k}^2 \leq 2 + \logk{k} n_{t,k} = 2 + t\,.
\]
Hence, $t > \logk{k}n - 2$.

This allows to conclude that the number of turns necessary to accept strings of length~$n$ is at least~$\logk{k}n-2$, for infinitely many integers~$n$. 
\qed
\end{proof}

By combining Theorem~\ref{th:loglog-up} and Theorem~\ref{th:loglog-lb} we obtain:

\begin{theorem}\label{th:hierarchy}
	For each~$k\geq 0$ there exists a language which is accepted by a \oca\ in at most~$max(1,\logk{k}n)$ turns 
	but that cannot be accepted by a \pda\ in~$o(\log^{(k)}n)$ turns.
\end{theorem}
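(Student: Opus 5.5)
The plan is to take $\Lk{k}$ as the witness language and obtain the statement as a direct combination of Theorems~\ref{th:loglog-up} and~\ref{th:loglog-lb}, with the case $k=0$ handled separately.

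For $k\geq 1$, Theorem~\ref{th:loglog-up} gives the upper bound at once: the \oca~$\ma{A}_k$ accepts $\Lk{k}$ in $\max(1,\logk{k}n)$ turns. For the lower bound, suppose, toward a contradiction, that some \pda~$\ma{M}$ accepts $\Lk{k}$ in $t(n)$ turns with $t(n)=o(\logk{k}n)$. Theorem~\ref{th:loglog-lb} provides infinitely many lengths $n$, namely the lengths $n=|w_{t,k}(a^{N'}b^{N'}\$)^t|$ for growing $t$, each carrying a string that $\ma{M}$ needs at least $\logk{k}n-2$ turns to accept. Hence $t(n)\geq\logk{k}n-2$ along an infinite sequence of lengths.

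I then need $\logk{k}n\to\infty$ along that sequence. This holds because $n\geq|w_{t,k}|\geq n_{t,k}$ and $\logk{k}n_{t,k}=t$, which grows without bound. Consequently $t(n)/\logk{k}n\geq 1-2/\logk{k}n\to 1$ on this sequence, so $t(n)=o(\logk{k}n)$ is impossible.

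For $k=0$ the target bound is linear, $\max(1,n)=n$ turns, and I will use the language $\Lk{0}=(\eq\$)^+$, where conditions (a) and (b) are vacuous. The upper bound comes from the \oca\ simulating $\meqs$ block by block, which makes one turn per block and therefore at most $n$ turns. For the lower bound, Lemma~\ref{lemma:lowerBoundEQ} applies with $y_0=\varepsilon$ and $y_i=\$$. The intersection $\{a,b\}^*\$\cdots\{a,b\}^*\$\cap\Lk{0}$ equals $\eq\$\cdots\eq\$$. Therefore the string $(a^{N'}b^{N'}\$)^t$, where $N'=N(t+1)$ so that its length is $n=\Theta(t^2)$, requires $t=\Theta(\sqrt n)$ turns.

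This is the main obstacle. The bound $\Theta(\sqrt n)$ does not exclude $o(n)$ turns, because the padding $N'$ in Lemma~\ref{lemma:lowerBoundEQ} grows with the number of delimiters. The problem is therefore confined to the degenerate case $k=0$. My first attempt would be to reprove the lemma for this special case: a uniform pumping constant $N$ should suffice there, because the $\$$ positions do not need to be tracked when each block $z_i$ is pumped separately. That would give $n=\Theta(t)$ and hence $\Omega(n)$ turns. If this does not work, I would state the result for $k\geq 1$ only, where the argument above is complete, and treat the linear case as a separate remark.
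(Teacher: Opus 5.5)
Your proposal is correct. For $k\geq 1$ it is exactly the paper's proof: Theorem~\ref{th:loglog-up} gives the upper bound and Theorem~\ref{th:loglog-lb} gives the lower bound. Your observation that $\logk{k}n\geq\logk{k}n_{t,k}=t\to\infty$ along the witnessing lengths makes explicit a step the paper only asserts ("this function grows as $\logk{k}n$"). One small fix: use a different letter for the hypothetical turn function, since $t$ is already the parameter of $w_{t,k}$.

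For $k=0$ the paper gives no argument; it simply asserts that $\Lk{0}=(\eq\$)^+$ cannot be accepted in $o(n)$ turns. Your diagnosis is right. With $y_1=\cdots=y_t=\$$, Lemma~\ref{lemma:lowerBoundEQ} forces $N'=N(t+1)$ and yields only $\Omega(\sqrt n)$.

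Your repair is sound, and you should commit to it. The fallback of restricting to $k\geq 1$ would not prove the statement as given. Here is why the repair works:
\begin{itemize}
\item In the lemma's proof, $N'$ depends on $|y_0\cdots y_k|$ only because of the automaton for $R$ and the component $s_j$ of the quintuples.
\item These serve only to keep the pumped string inside $R$, the region where $L$ is known to lie in $y_0\eq^*\cdots y_k$.
\item For $L=(\eq\$)^+$ no such confinement is needed, because a string fails to be in $L$ as soon as one $\$$-delimited maximal $\{a,b\}$-factor is unbalanced.
\item After pumping, the $i$-th factor becomes $a^{n_i}b^{n_i+\ell-j}$. The other pumped piece $z$ is read entirely after $b^{n_i}$, and the block is still closed by a $\$$. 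In the symmetric pop case the factor becomes $a^{n_i+\ell-j}b^{n_i}$.
\end{itemize}
So triples $[q_jA_jp_j]$ suffice, and $N=(\#Q)^2\cdot\#\Gamma$ works uniformly. The string $(a^Nb^N\$)^t$ has length $(2N+1)t$ and needs $t$ turns, which gives the linear lower bound the paper takes for granted.
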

\begin{proof}
Given~$k\geq 0$, let us consider the language~$\Lk{k}$. By Theorem~\ref{th:loglog-up}, $max(1,\logk{k}n)$ turns are enough to recognize it.
Furthermore, $\Lk{0}=(\eq\$)^+$ cannot be accepted in~$o(n)$ turns and,
by Theorem~\ref{th:loglog-lb}, for~$k>0$  the language~$\Lk{k}$ requires at least~$\logk{k}n-2$ turns to be accepted by a \pda.
Since this function grows as~$\logk{k}n$, we can conclude that~$\Lk{k}$ cannot be accepted in~$o(\log^{(k)}n)$ turns.
\qed
\end{proof}

As a consequence of Theorem~\ref{th:hierarchy}, we obtain an infinite proper hierarchy with respect to the number of turns:
for each~$k\geq 0$ the class of languages accepted in at most~$\log^{(k)}n$ turns properly includes that of languages accepted in at most~$\log^{(k+1)}n$ turns.

\section{Even Much Less Turns}
\label{sec:ustar}

In this section, we present a language~$\Ustar$ that can be accepted with less turns than each language in the 
hierarchy discussed in Section~\ref{sec:hie}. Indeed, we  prove that~$\logstar n$ turns are sufficient to recognize~$\Ustar$. We also prove that
this number is necessary, for infinitely many~$n$.\footnote{%
	Even in this case, the language~$\Ustar$ considered here is slightly different than the language presented in the preliminary version of the paper~\cite{Pig25}.
	For the `new' language~$\Ustar$, we are able to prove that~$\logstar n$ turns are necessary (Theorem~\ref{th:lstar-lower}), while for the `old' one we
	were only able to prove that an unbounded number of turns is necessary.}

To introduce the language~$\Ustar$, we preliminarily consider, for each~$k>0$, the following modification of the language~$\Lk{k}$:
\[
\begin{array}{ll}
\Uk{k}=\{y_kz_ky_{k-1}\cdots z_2y_1z_1\mid&y_i\in(\{0,1\}^+\$)^+, z_i\in\{a,b\}^+,  i=1,\ldots,k,\\
			&\mbox{and at least one of the following holds:}\\
			&\mbox{~~\rm(a)~} \exists i, 1\leq i\leq k, y_i\notin\ListBin,\\ 
			&\mbox{~~\rm(b)~} \exists i, 1< i\leq k, |\last{y_i}| \leq \occurr{y_{i-1}}{\$},\\ 
			&\mbox{~~\rm(c)~} \forall i, 1\leq i\leq k, z_i\in\eq\,\}\,.
\end{array}
\]
We observe that each string~$w=y_k\$y_{k-1}\$\cdots\$y_1\$y_0\in\Lk{k}$,
with~$y_0\in(\{a,b\}^+\$)^k$, can be transformed into a string of~$\Uk{k}$
by replacing the delimiters between consecutive blocks~$y_i$'s by maximal factors of~$y_0$ in~$\{a,b\}^+$.
The conditions~(a), (b), and~(c) are similar to those defining~$\Lk{k}$. Hence, the recognition of~$\Uk{k}$ can be done in a similar way,
using~$k$ turns.

\begin{example}
\label{example:ex2}
As discussed below, strings satisfying only condition~(c) in the definition of~$\Uk{k}$ play a crucial role in the study of the number of turns.
We present an example of such a string, for~$k=4$.
First let us consider the following strings:\\
{
\phantom{~~~}$y_1=\, 1\$$\\
\phantom{~~~}$y_2=\, 1\$10\$$\\
\phantom{~~~}$y_3=\, 1\$10\$11\$100\$$\\
\phantom{~~~}$y_4=\, 1\$10\$11\$100\$101\$110\$111\$1000\$1001\$1010\$1011\$1100\$1101\$1110\$1111\$10000\$$\\
}
We point out that~$y_i=\bin{1}\$\bin{1}\$\cdots\$\bin{\tetra{i-1}}\$$, $i=1,\ldots,4$. 
Let us take a string~$u=y_4z_4y_3z_3y_2z_2y_1z_1$, where~$z_1,z_2,z_3,z_4\in\{a,b\}^+$.
Then:
\begin{itemize}
\item Since for~$i=1,\ldots,4$, $y_i\in\ListBin$, $u$ does not satisfy condition~(a).
Furthermore~$\occurr{y_i}{\$}=\last{y_i}=\tetra{i-1}$ and~$|\last{y_i}|=\Log(\tetra{i-1}) + 1 > \Log(\tetra{i-1})$.
\item From the previous equalities, it turns out that for~$i=2,3,4$, $|\last{y_i}|>\tetra{i-2}=\last{y_{i-1}}=\occurr{y_{i-1}}{\$}$.
Hence, condition~(b) is not satisfied.
\item This allows to conclude that the string~$u$ belongs to~$\Uk{4}$ if and only if condition~(c) is satisfied, namely if and only if all
the strings~$z_i$ are in~$\eq$, with~$i=1,\ldots,4$.
\item In particular, if we choose~$z_1=z_2=z_3=z_4=ab$, then the string~$u$ so obtained is the shortest string in~$\Uk{4}$ satisfying only~(c).
This idea will be expanded in the proofs of the next results, in particular in the proof of Theorem~\ref{th:lstar-lower}.\qed
\end{itemize}
\end{example}

\noindent
We now define the language we are interested in as $\Ustar = \bigcup_{k\geq 0}\Uk{k}$.

\begin{theorem}
	\label{th:lstar-up}
	The language~$\Ustar$ is accepted by a \oca\ in~$\logstar n$ turns.
\end{theorem}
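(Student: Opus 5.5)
The plan is to build a one-counter automaton $\ma{A}$ that first checks, using only its finite control, that the input has the form $y_kz_ky_{k-1}\cdots z_2y_1z_1$ for some $k>0$. It then nondeterministically decides which of the conditions (a), (b), (c) in the definition of $\Uk{k}$ to verify. Note that $k$ is not fixed: it is the number of blocks $z_i$, and the finite control can parse the input without knowing $k$ in advance.

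Each condition is verified as follows.
\begin{itemize}
\item Condition (a) is checked by guessing a block $y_i$. The automaton simulates on it the $1$-turn \oca\ accepting the complement of $\ListBin$, and ignores the rest of the input.
\item Condition (b) is checked by guessing $i>1$. The automaton locates $\last{y_i}$ (it guesses the last binary block before $z_i$), pushes $|\last{y_i}|$ onto the counter, then skips $z_i$. While scanning $y_{i-1}$ it pops one unit per $\$$, and accepts if the counter empties within $y_{i-1}$ (with the boundary case $\leq$ handled by the finite control). This uses one turn.
\item Condition (c) is checked by simulating $\meq$ on each $z_i$ in turn, resetting the counter between blocks. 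This uses exactly $k$ turns.
\end{itemize}
So every accepted string satisfying (a) or (b) is accepted in at most one turn. It remains to bound $k$ for strings $w$ of length $n$ satisfying only (c).

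For such a $w$, the negation of (a) gives $y_i\in\ListBin$ for all $i$, so Lemma~\ref{lemma:Lbin} gives $\last{y_i}\geq 2^{|\last{y_i}|-1}$. The negation of (b) gives $|\last{y_i}|-1\geq \occurr{y_{i-1}}{\$}=\last{y_{i-1}}$ for $1<i\leq k$. Together these yield $\last{y_i}\geq 2^{\last{y_{i-1}}}$, exactly as in the proof of Theorem~\ref{th:loglog-up}. Since $\last{y_1}\geq 1=\tetra{0}$, induction gives $\last{y_i}\geq \tetra{i-1}$, and so $n\geq |y_k|\geq \last{y_k}\geq \tetra{k-1}$.

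The main obstacle is the off-by-one at the end. From $n\geq\tetra{k-1}$ we only get $\logk{k-1}n\geq 1$, i.e.\ $\logstar n\geq k-1$, one short of the claim. I would tighten the bound using the extra length in $w$. The $z_i$ are nonempty, $y_k$ contains the $\$$ delimiters, and Lemma~\ref{lemma:Lbin} gives $|y_k|\geq 2\last{y_k}$. Hence $n>|y_k|\geq 2\,\tetra{k-1}>\tetra{k-1}$, and then $\logk{k-1}n>1$, which forces $\logstar n\geq k$.

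I would double-check the small cases by hand, namely $k=1$ with $n\geq 3$, and $k=2$ against Example~\ref{example:ex2}. The definition $\logstar n=\min\{j\mid \logk{j}n\leq 1\}$ must strictly exceed $k-1$ there. Strings accepted via (a) or (b) need at most one turn, and $\logstar n\geq 1$ for every input long enough to have this form. This completes the bound of $\logstar n$ turns.
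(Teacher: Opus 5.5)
Your proposal is correct and follows essentially the same route as the paper. It uses the same three-way nondeterministic \oca, with one turn for (a) or (b) and $k$ turns for (c), and the same chain $\last{y_i}\geq 2^{\last{y_{i-1}}}$ giving $\last{y_i}\geq\tetra{i-1}$; the strict inequality $n>|y_k|>\tetra{k-1}$ that settles the off-by-one is exactly what the paper uses to get $\logstar n\geq k$ (your bound $|y_k|\geq 2\last{y_k}$ is not in the statement of Lemma~\ref{lemma:Lbin}, but it is immediate since each block $\bin{i}\$$ has length at least $2$).
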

\begin{proof}
	Given an  input string~$w=y_kz_ky_{k-1}\cdots z_2y_1z_1$, $k>0$, in the appropriate form (this can
	be verified while scanning the input, using the finite control), in order to check if~$w\in\Ustar$, a \oca~$\ma{B}$ 
	can guess one condition among~(a), (b), and~(c).
	Conditions~(a) and~(b) are verified nondeterministically choosing one block~$y_i$ and using one turn, as in the
	proof of Theorem~\ref{th:loglog-up}.
	To test condition~(c), $\ma{B}$ simulates the \oca~$\meq$ on each block~$z_i$, so making exactly~$k$ turns.
	This allows us to conclude that the maximum number of turns~$k$ is reached when~$w$ satisfies \emph{only} condition~(c).
	
	We are now going to estimate how large can be~$k$, with respect to the length~$n$ of an input~$w$ satisfying
	only~(c).
	
	For~$i=1,\ldots,k$, since condition~(a) is not satisfied, using Lemma~\ref{lemma:Lbin} we obtain that~$\last{y_i}\geq 2^{|\last{y_i}|-1}$.
	Since condition~(b) is not satisfied, when~$i>1$, we also have that~$|\last{y_i}| > \occurr{y_{i-1}}{\$} =\last{y_{i-1}} $, i.e., $|\last{y_i}|-1 \geq \last{y_{i-1}}$.
	Hence, for~$i>1$, $\last{y_i} \geq 2^\last{y_{i-1}}$. By definition, $\last{y_1}$ should be at least~$1=\tetra{0}$.
	This allows to conclude that, for~$i=1,\ldots,k$, $\last{y_i}\geq\tetra{i-1}$ and, hence, $|y_i|>\tetra{i-1}$.
	
	Hence, $n=|w|>|y_k|>\tetra{k-1}$, that gives~$\logstar n> k-1$ and, being~$\logstar n$ an integer, $\logstar n\geq k$.
	We remind the reader that~$k$ turns are sufficient to accept~$w$. 
	
	This allows to conclude that~$\logstar n$ turns are enough to recognize strings of length~$n$ in~$\Ustar$.
\qed
\end{proof}

We conclude this section by proving a matching lower bound:

\begin{theorem}
	\label{th:lstar-lower}
	Let~$\ma{M}$ be a \pda\ accepting the language~$\Ustar$. Then, for infinitely many~$n$ there exists a string of length~$n$ that requires~$\logstar n$ turns to be accepted by~$\ma{M}$.
\end{theorem}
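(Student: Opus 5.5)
We follow the pattern of the proofs of Theorems~\ref{th:sq} and~\ref{th:loglog-lb}: for each $k$ we build one string that satisfies \emph{only} condition~(c) of $\Uk{k}$. We then apply Lemma~\ref{lemma:lowerBoundEQ} to show that it needs $k$ turns, and finally check that its length $n$ satisfies $\logstar n\leq k$.

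\textbf{The strings.} Fix a \pda~$\ma{M}$ accepting $\Ustar$ and let $N$ be the constant of Lemma~\ref{lemma:lowerBoundEQ} for $\ma{M}$. For $k>0$, as in Example~\ref{example:ex2}, let
\[
y_i=\bin{1}\$\bin{2}\$\cdots\$\bin{\tetra{i-1}}\$,\qquad i=1,\ldots,k,
\]
and let $Y=|y_ky_{k-1}\cdots y_1|$. Take $N'=N\cdot(Y+1)$ and
\[
w_k=y_k a^{N'}b^{N'} y_{k-1}a^{N'}b^{N'}\cdots y_1a^{N'}b^{N'}.
\]
The computation in Example~\ref{example:ex2} works for every $k$. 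It shows $\last{y_i}=\occurr{y_i}{\$}=\tetra{i-1}$ and $|\last{y_i}|=\tetra{i-2}+1>\occurr{y_{i-1}}{\$}$. Hence conditions~(a) and~(b) fail for every string of the form $y_kz_k\cdots y_1z_1$ with all $z_i\in\{a,b\}^+$.

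\textbf{Applying Lemma~\ref{lemma:lowerBoundEQ}.} We claim that
\[
y_k\{a,b\}^*y_{k-1}\{a,b\}^*\cdots y_1\{a,b\}^*\cap\Ustar\subseteq y_k\eq^*y_{k-1}\eq^*\cdots y_1\eq^*.
\]
If all the factors in $\{a,b\}^*$ are nonempty, the decomposition into blocks is forced and the string can lie only in $\Uk{k}$. Since (a) and (b) fail, it must satisfy (c), so every $z_i\in\eq$. If some factor is empty, the string lies in $y_k\eq^*\cdots$ only when each factor is in $\eq^*$, which includes the empty string $\varepsilon$. So the delicate point is strings where some $z_i$ is empty while another factor is outside $\eq^*$. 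Such strings must be excluded from $\Ustar$. Two facts handle this. First, a merged block such as $y_iy_{i-1}$ is not in $\ListBin$. Second, the final factor $z_1$ must be nonempty for the string to have the required form. I expect this to be the main obstacle: checking carefully whether such merged strings fall into some $\Uk{j}$ via condition~(a). If they do, I would restrict attention to nonempty factors by adjusting the argument inside the proof of Lemma~\ref{lemma:lowerBoundEQ}. There, pumping the factor $b^{n_i}$ of a single block keeps that block nonempty, so only strings with all $z_i$ nonempty actually arise, and for those the inclusion holds. Granting this, Lemma~\ref{lemma:lowerBoundEQ} says every accepting computation of $\ma{M}$ on $w_k$ makes at least $k$ turns. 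Also $w_k\in\Ustar$ by (c).

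\textbf{Length estimate.} By Lemma~\ref{lemma:Lbin}, $|y_i|\leq 2m_i+m_i\Log m_i$ with $m_i=\tetra{i-1}$. As in Lemma~\ref{lemma:wtkLenght}, this gives $Y<(\tetra{k-1})^2$ for $k$ large enough. Then
\[
n=|w_k|=Y+2N'k\leq c\,k\,(\tetra{k-1})^2
\]
for a constant $c$ depending only on $N$. For $k$ large this is at most $2^{\tetra{k-1}}=\tetra{k}$. Hence $\logk{k}n\leq 1$, so $\logstar n\leq k$, and $w_k$ needs at least $k\geq\logstar n$ turns. The lengths $|w_k|$ are strictly increasing in $k$, so this gives infinitely many such $n$.
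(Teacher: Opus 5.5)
Your argument is essentially the paper's: the same blocks $y_i$, a block $a^{N'}b^{N'}$ after each one, Lemma~\ref{lemma:lowerBoundEQ} for the $k$ turns, and a length bound. There are two small differences. Your $N'=N(Y+1)$ meets the lemma's hypothesis exactly, whereas the paper takes $N_k=Nk|y_k|$ and needs $k$ large. Your estimate $n\leq 3Nk(\tetra{k-1})^2\leq\tetra{k}$ gives $\logk{k}n\leq 1$ by monotonicity alone, whereas the paper goes through Lemma~\ref{lemma:logk-prod} to get $\logk{k-1}n\leq 2$.

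The point you leave open has a definite answer, and it is the unfavourable one: merged strings do lie in $\Ustar$. For instance, $y_ky_{k-1}\,ba\,y_{k-2}a^{N'}b^{N'}\cdots y_1a^{N'}b^{N'}$ is in $\Uk{k-1}$ by condition~(a), because the merged block $y_ky_{k-1}$ is not in $\ListBin$. So your displayed inclusion is false, and Lemma~\ref{lemma:lowerBoundEQ} cannot be invoked verbatim. The paper's proof asserts this inclusion in words (replacements from $\{a,b\}^*$ give strings of $\Ustar$ only if they are in $\eq$) and overlooks the same case.

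Your fallback is the correct repair, but commit to it and state it completely: in the proof of Lemma~\ref{lemma:lowerBoundEQ}, always pump \emph{up}.
\begin{itemize}
\item A horizontal loop reads only $b^{\ell-j}$ (symmetrically $a^{\ell-j}$), so repeating it lengthens block $i$.
\item For a vertical loop, the partner segment is read between configurations with the same state $s_j=s_\ell$ of the finite automaton for $R$. That automaton's only cycles are the self-loops on $a,b$, so the segment lies inside a single $\{a,b\}$-factor, and repeating it only lengthens that factor.
\end{itemize}
Observing that block $i$ stays nonempty is not enough. Pumping \emph{down} could erase the partner factor entirely, which is exactly the situation where the inclusion fails and no contradiction arises. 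With upward pumping, every string the proof produces has all $k$ factors nonempty. Hence the lemma holds under the hypothesis with $\{a,b\}^+$ in place of $\{a,b\}^*$. That weaker inclusion is exactly what you verified: the parse into $\Uk{k}$ is forced, (a) and (b) fail, so (c) must hold. With this repair your proof is complete.
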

\begin{proof}
	For each~$i>0$, let us consider~$y_i=\bin{1}\$\bin{1}\$\cdots\$\bin{\tetra{i-1}}\$$.
	By Lemma~\ref{lemma:Lbin},
	\[|y_i|\leq 2\cdot\tetra{i-1} + \tetra{i-1}\Log(\tetra{i-1}) = \tetra{i-1}(2+\Log(\tetra{i-1}))\,.\]
	Let~$N$ be the constant of Lemma~\ref{lemma:lowerBoundEQ} for~$\ma{M}$.
	For~$k>0$, let~$N_k=N\cdot k\cdot|y_k|$ and consider the string
	\[
	u_k= y_ka^{N_k}b^{N_k}y_{k-1}a^{N_k}b^{N_k}\cdots y_1a^{N_k}b^{N_k}\,.
	\]
	It is immediate to observe that the string~$u_k$ does not satisfy condition~(a) in the definition of~$\Ustar$.
	Moreover, for~$i=2,\ldots,k$:
\[
|\last{y_i}|=|\tetra{i-1}|=\Log\tetra{i-1}+1=\tetra{i-2}+1=\last{y_{i-1}}+1>\occurr{y_{i-1}}{\$}\,.
\]	
	Hence, the string~$u_k$ satisfies only condition~(c) in the definition of~$\Ustar$.
	Furthermore, if~$k$ is sufficiently large, then~$N_k\geq N\cdot(|y_k\cdots y_1|+1)$ and
	by replacing some of the blocks~$a^{N_k}b^{N_k}$ in~$u_k$ with strings in~$\{a,b\}^*$, we obtain strings in~$\Ustar$ only if strings
	from~$\eq$ are used in the replacements. From Lemma~\ref{lemma:lowerBoundEQ}, we derive that~$k$ turns are necessary to accept~$u_k$
	
	To complete the proof, we are going to relate~$k$ to the length~$n$ of~$u_k$.
	Using Lemma~\ref{lemma:Lbin}, we observe that:
	\begin{eqnarray*}
		n = |u_k| & \leq & k\cdot|y_ka^{N_k}b^{N_k}| \\
		& = & k\cdot(|y_k|+2Nk\cdot|y_k|) = |y_k|\cdot (1+2Nk)\cdot k\\
		& \leq & \tetra{k-1} \cdot (2+\Log(\tetra{k-1})) \cdot (1+2Nk)\cdot k\,.
	\end{eqnarray*}
	If~$k$ is large enough, then we have that~$\tetra{k-1} > (2+\Log(\tetra{k-1})) \cdot (1+2Nk)\cdot k$.
	Using Lemma~\ref{lemma:logk-prod} we obtain:
	\[
		\logk{k-1}n = \logk{k-1}|u_k|  \leq  1 + \logk{k-1}(\tetra{k-1}) = 2\,.
	\]
	Hence, $\logk{k}n \leq 1$ that gives~$k\geq\logstar n$.
\qed
\end{proof}

\section{Concluding Remarks}
\label{sec:conclu}

In some recent papers on pushdown and one-counter automata, we considered two complexity measures: the \emph{height of the pushdown} and
the \emph{push complexity}~\cite{Pig2026,PP25,PP23}. In this paper, we continued that line of research, 
investigating~\emph{turn complexity}.

While there are immediate relationships between the first two measures,
it is not clear if turn complexity can be in some extent related to the other two measures.
The examples of \pdas\ using minimal non-constant~$O(\log\log n)$ height or push complexity presented in the above-cited
papers make at most one turn in each accepting computation. It would be interesting to know if there exist machines using a `small' 
but unbounded number of turns and, at the same time, $O(\log\log n)$ height or push complexity, or if the use of an unbounded
number of turns implies a larger use of height and push operations.
We point out that the \oca~$\ma{A}_k$ accepting the language~$\Lk{k}$ and described in the proof of Theorem~\ref{th:loglog-up},
can use an unbounded height to recognize strings satisfying only condition~(c).
It would be interesting to investigate a complexity measure that keeps into account at the same time the height (or the number of push operations) and the number to turns.

\smallskip

Another question concerns the possibility of reformulating our results in terms of grammars.
It is know that finite-turn pushdown automata corresponds to ultralinear and to non-terminal bounded grammars~\cite{GS66}.
Suppose to have a \pda~$\ma{M}$ accepting in a number of turns bounded by a non-constant function~$f(n)$. Can we relate~$f(n)$
to some measure on a context-free grammar obtained from~$\ma{M}$ using a `standard' transformation?

\smallskip

Finally, we present the problem of the restrictions to the case of unary languages, i.e., languages defined over a one-letter alphabet, or the  case of bounded languages. We expect that in these cases all the properties investigated in Section~\ref{sec:undec} become decidable.
Furthermore, the techniques used in Sections~\ref{sec:sublinear}, \ref{sec:hie}, and~\ref{sec:ustar} to reduce the number of turns do not seem to work in these cases.
We conjecture that it is not be possible to have a number of turns which is not constant, but arbitrarily slowly increasing, as in the general case.

\bibliographystyle{alpha}
\bibliography{biblio}

\end{document}